\theoremstyle{plain}
  \newtheorem{theorem}{Theorem}[section]
  \newtheorem{proposition}[theorem]{Proposition}
\theoremstyle{definition}
  \newtheorem{definition}[theorem]{Definition}
  \newtheorem{ex}[theorem]{Example}
  \newtheorem{remark}[theorem]{Remark}
  \newcommand{\N}{\mathbb{N}}
  \newcommand{\Z}{\mathbb{Z}}
    \newcommand{\simplex}{\sigma}
\newcommand{\loss}{\ensuremath{\mathcal{L}}}
\NewDocumentCommand{\Sett}{mO{}}{\ensuremath{\qty{#1 \ifthenelse{\isempty{#2}}{}{\ | \ {#2}}}}}
\NewDocumentCommand{\Dgm}{O{}O{}O{\bullet}}{%
 \ensuremath{\ifthenelse{\isempty{#3}}{}{\mathsf{Dgm}^{\mathsf{#1}}_{#3}}\ifthenelse{\isempty{#2}}{}{\qty(#2)}}
}
\NewDocumentCommand{\ExtDgm}{O{}O{}O{\bullet}}{%
 \ensuremath{\ifthenelse{\isempty{#3}}{}{\mathsf{ExtDgm}^{\mathsf{#1}}_{#3}}\ifthenelse{\isempty{#2}}{}{\qty(#2)}}
}
\NewDocumentCommand{\ExtPH}{O{}O{}O{}}{%
 \ensuremath{\mathsf{EPH}^{\mathsf{#1}}_{#3}\ifthenelse{\isempty{#2}}{}{\qty(#2)}}
}
\newcommand{\MUTAG}{\texttt{MUTAG}\xspace}
\newcommand{\COX}{\texttt{COX2}\xspace}
\newcommand{\DHFR}{\texttt{DHFR}\xspace}
\newcommand{\NCI}{\texttt{NCI1}\xspace}
\newcommand{\IMDB}{\texttt{IMDB-B}\xspace}
\newcommand{\PRO}{\texttt{PROTEINS}\xspace}
\newcommand{\accerr}[2]{{#1}\tiny{\ensuremath{\pm{#2}}}}
\newcommand{\GIN}{{GIN}\xspace}
\newcommand{\GFL}{\texttt{GFL}\xspace}
\newcommand{\Perslay}{\texttt{Perslay}\xspace}
\newcommand{\Barc}{\textbf{Bar}}
\newcommand{\persmod}{\mathsf{V}}
\newcommand{\M}{\mathcal{M}}
\newcommand{\param}{\theta}
\newcommand{\parametrization}{\mathrm{F}}
\newcommand{\out}{{\ensuremath{\mathsf{Out}}}}
\NewDocumentCommand{\OrdPH}{O{}O{}}{%
 \ensuremath{\ifthenelse{\isempty{#2}}{\mathsf{PH}}{\mathsf{OrdPH}^{#2}}\ifthenelse{\isempty{#1}}{}{\qty(#1)}}
}
\NewDocumentCommand{\EPH}{mO{}O{\bullet}}{%
 \ensuremath{\mathsf{EPH}^{#1}_{#3}\ifthenelse{\isempty{#2}}{}{\qty(#2)}}
}
\NewDocumentCommand{\Barcode}{mO{}O{\bullet}}{%
 \ensuremath{\mathsf{Bar}^{#1}_{#3}\ifthenelse{\isempty{#2}}{}{\qty(#2)}}
}
 \NewDocumentCommand{\Bars}{O{}}{%
 \ensuremath{\mathfrak{Bar}\ifthenelse{\isempty{#1}}{}{\qty[#1]}}
}
\NewDocumentCommand{\Cone}{O{}O{}}{%
 \ensuremath{\mathsf{C}_{#2} \ifthenelse{\isempty{#1}}{}{\qty(#1)}}
}
 \NewDocumentCommand{\R}{O{}}{%
\ensuremath{\ifthenelse{\isempty{#1}}{\mathbb{R}}{\mathbb{R}^{\mathsf{#1}}}}
}
\begin{document}
\onecolumn

\title[Optimising Graph Wavelet Persistence]{Optimisation of Spectral Wavelets for Persistence-based Graph Classification}  

\author{Ka Man Yim$^1$ \and Jacob Leygonie$^2$}
\address{Mathematical Institute, University of Oxford}
\email{$^1$yim@maths.ox.ac.uk, $^2$leygonie@maths.ox.ac.uk}

\maketitle

\begin{abstract}
 A graph’s spectral wavelet signature determines a filtration, and consequently an associated set of extended persistence diagrams. We propose a framework that optimises the choice of wavelet for a dataset of graphs, such that their associated persistence diagrams capture features of the graphs that are best suited to a given data science problem. Since the spectral wavelet signature of a graph is derived from its Laplacian, our framework encodes geometric properties of graphs in their associated persistence diagrams and can be applied to graphs without a priori node attributes. We apply our framework to graph classification problems and obtain performances competitive with other persistence-based architectures. To provide the underlying theoretical foundations, we extend the differentiability result for ordinary persistent homology to extended persistent homology.

\end{abstract}

\section{Introduction}
\subsection{Background}
Graph classification is a challenging problem in machine learning. Unlike data represented in Euclidean space, there is no easily computable notion of distance or similarity between graphs. As such, graph classification requires techniques that lie beyond mainstream machine learning techniques focused on Euclidean data. Much research has been conducted on methods such as graph neural networks (GNNs) \cite{xu2019graph} and graph kernels \cite{vishwanathan2010graph, shervashidze2009efficient} that embed graphs in Euclidean space in a consistent manner.

Recently, \emph{persistent homology}~\cite{zomorodian2005computing,edelsbrunner2008persistent} has been applied as a feature map that explicitly represents topological and geometric features of a graph as a set of \emph{persistence diagrams} (a.k.a. \emph{barcodes}). In the context of our discussion, the persistent homology of a graph $G = (V,E)$ depends on a vertex function ${f: V \to \R}$. In the case where a vertex function is not given with the data, several schemes have been proposed in the literature to assign vertex functions to graphs in a consistent way. For example, vertex functions can be constructed using local geometric descriptions of vertex neighbourhoods, such as discrete curvature  \cite{zhao2019learning}, heat kernel signatures \cite{carriere2020perslay} and Weisfeiler–Lehman graph kernels~\cite{rieck2019persistent}.

However, it is often difficult to know \emph{a priori} whether a heuristic vertex assignment scheme will perform well in addressing different data science problems. For a single graph, we can optimise the vertex function over $\qty|V|$ many degrees of freedom in $\R^V$. In recent years, there have been many other examples of persistence optimisation in data science applications. The first two examples of persistence optimisation are the computation of Fr\'echet mean of barcodes using gradients on Alexandrov spaces~\cite{turner2014frechet}, and that of point cloud inference~\cite{gameiro2016continuation}, where a point cloud is optimised so that its barcode fits a target fixed barcode. The latter is an instance of topological inverse problems (see~\cite{oudot2020inverse} for a recent overview of such). Another inverse problem is that of surface reconstruction~\cite{bruel2020topology}. Besides, in the context of shape matching~\cite{poulenard2018topological}, persistence optimisation is used in order to learn an adequate function between shapes. Finally, there are also many recent applications of persistence optimisation in Machine Learning, such as the incorporation of topological information in Generative Modelling~\cite{moor2020topological, hofer2019connectivity,gabrielsson2020topology} or in Image Segmentation~\cite{hu2019topology,clough2019explicit}, the design of topological losses for Regularization in supervised learning~\cite{chen2019topological} or for dimension reduction~\cite{kachan2020persistent}. 

Each of these applications can be thought of as minimising a certain \emph{loss} function over a manifold~$\M$ of parameters:
\begin{equation*} 
    \begin{tikzcd}
    \min_{\theta\in \M }\loss(\theta),
    \end{tikzcd} 
\end{equation*}
where $\loss: \M\rightarrow \Barc^N \rightarrow \R$ factors through the space~$\Barc^N$ of~$N$-tuples of barcodes. The aim is to find the parameter~$\theta$ that best fits the application at hand. 
Gradient descent is a very popular approach in minimisation, but it requires the ability to differentiate the loss function. 
In fact, \cite{leygonie2019framework} provide notions of differentiability for maps in and out~$\Barc$ that are compatible with smooth calculus, and show that the loss functions~$\loss$ corresponding the applications cited in the above paragraph are generically differentiable. The use of (stochastic) gradient descent is further legitimated by \cite{carriere2020note}, where convergence guarantees on persistence optimisation problems are devised, using a recent study of stratified non-smooth optimisation problems~\cite{davis2020stochastic}. In practice, the minimisation of~$\loss$ can be unstable due to its non-convexity and partial non-differentiability. Some research has been conducted in order to smooth and regularise the optimisation procedure~\cite{solomon2020fast,corcoran2020regularization}.

In a supervised learning setting, we want to optimise our vertex function assignment scheme over many individual graphs in a dataset. Since graphs may not share the same vertex set and come in different sizes, optimising over the $\qty|V|$ degrees of freedom of any one graph is not conducive to learning a vertex function assignment scheme that can generalise to another graph. The degrees of freedom in any practical vertex assignment scheme should be independent of the number of vertices of a graph. However, a framework for  parametrising and optimising the vertex functions of many graphs over a common parameter space $\M$ is not immediately apparent.

The first instance of a graph persistence optimisation framework (\GFL) \cite{hofer2020graph} uses a one layer graph isomorphism network (\GIN) \cite{xu2019graph} to parametrise vertex functions. The \GIN learns a vertex function by exploiting the local topology around each vertex. In this paper, we propose a different framework for assigning and parametrising vertex functions, based on a graph's Laplacian operator. Using the Laplacian, we can explicitly take both local and global structures of the graph into consideration in an interpretable and transparent manner.  \\

\subsection{Outline and Contributions} We address the issue of vertex function parametrisation and optimisation using \emph{wavelet signatures}. Wavelet signatures are vertex functions derived from the eigenvalues and eigenvectors of the graph Laplacian and encode multiscale geometric information about the graph \cite{li2013multiresolution}. The wavelet signature of a graph is dependent on a choice of wavelet $g: \R \to \R$, a function on the eigenvalues of the graph's Laplacian matrix. We can thus obtain a parametrisation of vertex functions for any graph $\parametrization : \M \to \R^V$ by parametrising $g$. Consequently, the \emph{extended persistence} of a graph \--- which has only four non-trivial persistence diagrams \--- can be varied over the parameter space $\M$. If we have a function $\out: \Barc^4 \to \R$ on persistence diagrams that we wish to minimise, we can optimise over $\M$ to minimise the loss function
\begin{equation} \label{eq:pipeline}
    \begin{tikzcd}
    \loss : \M  \arrow[r, "\parametrization"] & \R^{V} \arrow[r, "\ExtPH"] &\Barc^4 \arrow[r, "{\out}"] & \R.
    \end{tikzcd} 
\end{equation}
If $\loss$ is generically differentiable, we can optimise the wavelet signature parameters $\theta \in \M$ using gradient descent methods. We illustrate an application of this framework to a graph classification problem in \Cref{fig:pipeline}, where the loss function $\loss$ is the classification error of a graph classification prediction model based on the graph's extended persistence diagrams. 

\begin{figure}
    \centering
    \includegraphics[width = \textwidth]{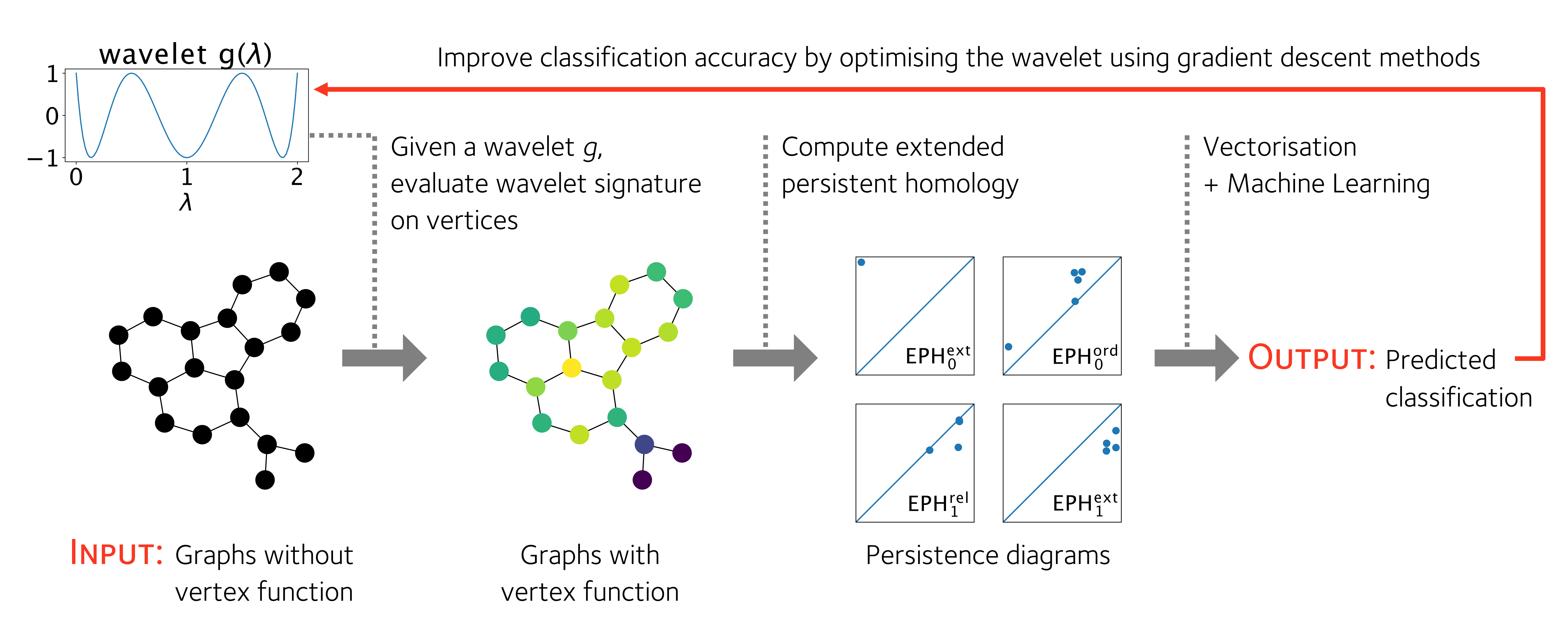}
    \caption{Given a wavelet $g: \R \to \R$, we can equip any graph with a non-trivial vertex function. This allows us to compute the extended persistence diagrams of a graph and use the diagrams as features of the graph to predict a graph's classification in some real world setting. The wavelet $g$ can be optimised to improve the classification accuracy of a graph classification pipeline based on the extended persistence diagrams of a graph's vertex function. }
    \label{fig:pipeline}
\end{figure}

In \Cref{sec:wavelet}, we describe the assignment of vertex functions $\parametrization: \M \to \R^V$ by reviewing the definition of wavelet signatures. While spectral wavelets have been used in graph neural network architectures that predict vertex features \cite{xu2019graph} and compress vertex functions \cite{rustamov2019wavelets}, they have not been considered in a persistent homology framework for graph classification. We describe several ways to parametrise wavelets. We also show in \Cref{prop:wavelet} that wavelet signatures are independent of the choice of eigenbasis of the graph Laplacian from which it is derived, ensuring that it is well-defined. We prove this result in \Cref{app:wav}.

In \Cref{sec:eph}, we describe the theoretical basis for optimising the \emph{extended} persistent homology of a vertex function $\ExtPH: \R^V \to \Barc^4$ and elucidate what it means for $\loss$ to be differentiable. In  \Cref{proposition_extended_generically_differentiable}, we generalise the differentiability formalism of ordinary persistence \cite{leygonie2019framework} to extended persistence. We prove this result in \Cref{app:diff}

Finally, in \Cref{sec:expts}, we apply our framework to graph classification problems on several benchmark datasets. We show that our model is competitive with state-of-the-art persistence-based models. In particular, optimising the vertex function appreciably improves the prediction accuracy on some datasets.

\section{Filter Function Parametrization} \label{sec:wavelet}
We describe our recipe for assigning vertex functions to any simplicial graph $G = (V,E)$ based on a parametrised spectral wavelet, the first part $\parametrization$ of the loss function 
\begin{equation} 
    \begin{tikzcd}
    \loss : \M  \arrow[r, "\parametrization"] & \R^{V} \arrow[r, "\ExtPH"] &\Barc^4 \arrow[r, "{\out}"] & \R
    \end{tikzcd}. \tag{\Cref{eq:pipeline} recalled}
\end{equation}
Our recipe is based on a graph's wavelet signature, a vertex function derived from the graph's Laplacian. The wavelet signature also depends on a so-called `wavelet function' in $g : \R \to \R$, which is independent of the graph. By modulating the wavelet, we can jointly vary the wavelet signature across many graphs. We parametrise the wavelet using a finite linear combination of basis functions, such that the wavelet signature can be manipulated in a computationally tractable way. In the following section, we define the wavelet signature and describe our linear approach to wavelet parametrisation. \\

\subsection{Wavelet Signatures} The wavelet signature is a vertex function initially derived from wavelet transforms of vertex functions on graphs \cite{hammond2011wavelets}, a generalisation of wavelet transforms for square integrable functions on Euclidean space \cite{graps1995introduction, chui2016introduction} for signal analysis \cite{akansu2001multiresolution}. Wavelet signatures for graphs have been applied to encode geometric information about meshes of 3D shapes \cite{akansu2001multiresolution, li2013multiresolution}. Special cases of wavelets signatures, such as the heat kernel signature \cite{sun2009concise} and wave kernel signature \cite{aubry2011wave}, have also been applied to describe graphs and 3D shapes \cite{bronstein2010scale, hu2014stable}.

The wavelet signature of a graph is constructed from the graph's Laplacian operator. A graph's normalised Laplacian $L \in \R^{V \times V}$ is a symmetric positive semi-definite matrix, whose entries are given by
\begin{equation}
    L_{uv} = 
    \begin{cases} 
    1 & u=v \\
    -\frac{1}{\sqrt{k_u k_v}} & (u,v) \in E \\
    0  & \text{otherwise}
    \end{cases}
\end{equation}
where $k_u$ is the degree of vertex $u$. The Laplacian's eigenvalues $\lambda$ and eigenvectors $ \vb*{\phi}$ are known to encode various topological and geometric information about the graph \cite{chung1997spectral, biyikoglu2007laplacian}; for example, the number of zero eigenvalues corresponds to the number of connected components of the graph. The spectrum of the normalised Laplacian have real eigenvalues in  $\qty[0,2]$ \cite{chung1997spectral}. As such, any function $g: \R \to \R$ evaluated on the eigenvalues need only be defined on $\qty[0,2]$. Moreover, functions on a compact domain are easily parametrised using convenient bases. 

\begin{restatable}{definition}{defwavelet}[Wavelet Signature \cite{li2013multiresolution}] \label{def:wavelet} Let $L \in \R^{V\times V} $ be the normalised Laplacian of a simplical graph $G = (V,E)$. Let $\vb*{\phi}_1,\ \ldots,\ \vb*{\phi}_{\qty|V|}$ be an orthonormal eigenbasis for $L$ and $\lambda_1, \ldots ,  \lambda_{\qty|V|}$ be their corresponding eigenvalues. The wavelet signature $W: \R^\qty[0,2] \to \R^V$ maps a function $g: [0,2] \to \R$, which we refer to as a \emph{wavelet}, to a vertex function $W(g) \in \R^V$ linearly, where the value of $W(g)$ on vertex $v$ is given by
\begin{equation} \label{eq:WSdef}
    W(g)_v = \sum_{i=1}^{\qty|V|} g(\lambda_i)\qty(\vb*{\phi}_{i})_v^2,
\end{equation}
and $\qty(\vb*{\phi}_{i})_v$ denotes the component of eigenvector $\vb*{\phi}_{i}$ corresponding to vertex $v$.
\end{restatable}

If the eigenvalues of $L$ have geometric multiplicity one (i.e. their eigenspaces are one dimensional), then the orthonormal eigenvectors are uniquely defined up to a choice of sign. It is then apparent from \Cref{eq:WSdef} that the wavelet signature is independent of the choice of sign. However, if some eigenvalues have geometric multiplicity greater than one, then the orthonormal eigenvectors of $L$ are uniquely defined up to orthonormal transformations in the individual eigenspaces. However, the wavelet signature is well-defined even when the multiplicities of eigenvalues are greater than one. This is the content of the next Proposition, whose proof is deferred to \Cref{app:wav}.

\begin{restatable}{proposition}{waveletprop}
\label{prop:wavelet}
The wavelet signature of a graph is independent of the choice of orthonormal eigenbasis for the Laplacian.
\end{restatable}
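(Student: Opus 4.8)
The plan is to show that although individual orthonormal eigenvectors are only defined up to orthogonal transformations within each eigenspace, the quantity appearing in \Cref{eq:WSdef} depends only on the spectral projections of $L$, which are canonical. First I would group the sum in \Cref{eq:WSdef} according to the distinct eigenvalues of $L$. Writing $\mu_1, \ldots, \mu_r$ for the distinct eigenvalues and $E_{\mu}$ for the eigenspace of $L$ associated to $\mu$, the wavelet signature becomes
\[
    W(g)_v = \sum_{k=1}^{r} g(\mu_k) \sum_{i \,:\, \lambda_i = \mu_k} (\phi_i)_v^2,
\]
so it suffices to prove that each inner sum $\sum_{i : \lambda_i = \mu_k} (\phi_i)_v^2$ is independent of the choice of orthonormal basis of $E_{\mu_k}$.

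Next I would observe that any two orthonormal bases $\{\phi_i\}$ and $\{\psi_j\}$ of a fixed eigenspace $E_{\mu}$ of dimension $d$ are related by an orthogonal matrix $O$ via $\psi_j = \sum_i O_{ij}\phi_i$, where $O^\top O = \mathrm{Id}$. Expanding the square and exchanging the order of summation, the invariance reduces to the orthogonality relation $\sum_j O_{ij} O_{kj} = \delta_{ik}$:
\[
    \sum_j (\psi_j)_v^2 = \sum_{i,k} (\phi_i)_v (\phi_k)_v \sum_j O_{ij} O_{kj} = \sum_i (\phi_i)_v^2.
\]
This short computation is the crux of the argument; the only point requiring care is to confirm that an arbitrary change of orthonormal eigenbasis cannot mix eigenvectors belonging to distinct eigenvalues, which holds because eigenvectors of the symmetric matrix $L$ with distinct eigenvalues are automatically orthogonal, so any orthonormal-basis change is block-diagonal with one orthogonal block per eigenspace.

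Equivalently --- and this is the viewpoint I would emphasise for conceptual clarity --- each inner sum equals the $(v,v)$ diagonal entry of the orthogonal projection $P_{\mu_k} = \sum_{i : \lambda_i = \mu_k} \phi_i \phi_i^\top$ onto $E_{\mu_k}$, and $P_{\mu_k}$ is determined by $L$ alone through the spectral theorem. Thus $W(g)_v = \sum_k g(\mu_k) (P_{\mu_k})_{vv}$, which one recognises as the diagonal entry $(g(L))_{vv}$ of the matrix $g(L)$ obtained by functional calculus, and this is manifestly independent of any basis choice. I do not anticipate a genuine obstacle: the result is essentially the well-definedness of functional calculus, and the substantive content is simply the orthogonality identity above that pins down the basis ambiguity to the action of the orthogonal group within each eigenspace.
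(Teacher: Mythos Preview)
Your proposal is correct and follows essentially the same approach as the paper: both group the sum by distinct eigenvalues and then verify that $\sum_{i:\lambda_i=\mu}(\phi_i)_v^2$ is unchanged under an orthogonal change of basis within each eigenspace, the paper via the matrix identity $\Phi(\lambda)U(\lambda)U(\lambda)^\top\Phi(\lambda)^\top=\Phi(\lambda)\Phi(\lambda)^\top$ and you via the equivalent componentwise computation. Your additional remark that this inner sum is the diagonal entry of the spectral projection $P_{\mu_k}$ (equivalently, that $W(g)_v=(g(L))_{vv}$) is a nice conceptual gloss, echoing the paper's Remark~\ref{remark:wavelet_interp}, but the core argument is the same.
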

\begin{remark}  \label{remark:wavelet_interp}
In addition to the traditional view of wavelets from a spectral signal processing  perspective \cite{hammond2011wavelets}, we can also relate the wavelet signature of a vertex $v$ to the degrees of vertices in some neighbourhood of $v$ prescribed by $g$. Consider a wavelet~$g: [0,2] \to \R$. On a finite graph $G$, the normalised Laplacian $L$ has at most $\qty|V|$ many distinct eigenvalues. As such, there exists a polynomial $\hat{g}(x) = \sum_{n=0}^p a_n x^n$ of finite order that interpolates $g$ at the eigenvalues $g(\lambda_i) = \hat{g}(\lambda_i)$. Therefore, $W\qty(g) = W\qty(\hat{g})$. Moreover, the vertex values assigned by $W\qty(\hat{g})$ are the diagonal entries of the matrix polynomial $\hat{g}(L)$:
\begin{equation}\label{eq:wavelet_polyseries}
   \hat{g}(L)_{vv} = \sum_{n=0}^p  a_n \qty(L^n)_{vv} = \sum_{i=1}^{\qty|V|} \hat{g}(\lambda_i)\qty(\vb*{\phi}_{i})_v^2 = \sum_{i=1}^{\qty|V|} g(\lambda_i)\qty(\vb*{\phi}_{i})_v^2 = W(g)_{vv}.
\end{equation}
Furthermore, we can also write the matrix polynomial $\hat{g}(L)$ as a matrix polynomial in $A = I-L$, the \emph{normalised adjacency matrix}. From the definition of $L$, we can compute the diagonal entry of a monomial $A^r$  corresponding to vertex $v$ as an inverse degree weighted count of paths\footnote{Here a path refers to a sequences of vertices that are connected to the next vertex in the sequence by an edge.} $\qty[v_0, v_1, \ldots, v_r]$ on the graph which begin and end on vertex $v = v_0 = v_{r}$~\cite{newman2018networks}:
\begin{equation}
    \qty(A^r)_{vv} =  \frac{1}{k_v} \sum_{\qty[v, v_1, \ldots, v_{r-1}, v]}  \qty(\prod_{l = 1}^{r-1} \frac{1}{k_{v_l}} ).
\end{equation}
By expressing the wavelet signature as a matrix polynomial in $A$, we see that $g$ controls how information at different length scales of the graph contribute to the wavelet signature. For instance, if $g$ were an order~$p$ polynomial, then $W(g)_v$ only takes the degrees of vertices that are $\lfloor p/2 \rfloor$ away from $v$ into account. As a corollary, since $W(g)$ can be specified by replacing $g$ with a polynomial $\hat{g}$ of order at most $\qty|V|-1$, the wavelet signature at a vertex is only dependent on the subgraph of $G$ that is within $\lfloor\qty|V|-1\rfloor/2$ steps away from $v$. 
\end{remark}

\subsection{Parametrising the Wavelet}  \label{sec:param}
We see from \Cref{remark:wavelet_interp} that the choice of wavelet $g$ determines how the topology and geometry of the graph is reflected in the vertex function. Though the space of wavelets is potentially infinite dimensional, here we only consider wavelets $g_\theta(x)$ that are parametrised by parameters $\theta$ in a finite dimensional manifold, so that we can easily optimise them using computational methods. In particular, we focus on wavelets written as a linear combination of~$m$ \emph{basis functions} $h_1, \ldots , h_m : [0,2] \to \R$
\begin{equation}\label{eq:wavelet_coeff}
   g_\theta(x) :=  \sum_{j=1}^m \theta_j h_j(x)
\end{equation}
This parametrisation of wavelets in turn defines a parametrisation of vertex functions $\parametrization: \R^m \to \R^V$ for our optimisation pipeline in \cref{eq:pipeline}
\begin{equation}
   \parametrization:\quad  \theta \in \R^m \quad  \longmapsto \quad \parametrization(\theta) := W\qty(g_\theta) \quad \in \R^V.
\end{equation}
Since $W(g)$ is a linear function of the wavelet $g$, $\parametrization$ is a linear transformation:
\begin{equation}
    \parametrization(\theta)  = W\qty(\sum_{j=1}^m \theta_j h_j(x)) = \sum_{j=1}^m \theta_j W\qty(h_j).
\end{equation}
We can write $\parametrization$ as a $\qty|V|\times m$ matrix acting on a vector $\qty[\theta_1, \ldots \theta_m]^\intercal \in \R^m$, whose columns are the vertex functions $W\qty(h_j)$.

\begin{ex}[Chebyshev Polynomials]
Any Lipschitz continuous function on an interval can be well approximated by truncating its Chebyshev series at some finite order~\cite{trefethen1997numerical}. 
The Chebyshev polynomials  $T_n: [-1, 1] \to \R$
\begin{equation}\label{eq:chebs}
    T_n\qty(x) = \cos(n\arccos(x)) \quad n \in \N_{\geq 0} .
\end{equation}
form an orthonormal set of functions. We can thus consider $h_j(\lambda) = T_j(\lambda - 1),\ j = 0, 2, \ldots ,m$ as a naïve basis for wavelets. We exclude $T_1(x) = x$ in the linear combination as $W(T_1(1-x)) = 0$ for graphs without self loops.
\end{ex}

\begin{ex}[Radial Basis Functions]
In the machine learning community, a \emph{radial function} refers loosely to a continuous monotonically decreasing function $\rho: \R_{\geq 0} \to \R_{\geq 0} $. There are many possible choices for $\rho$, for example, the inverse multiquadric
\begin{equation} \label{eq:rbf}
    \rho(r) = \qty(\qty(\frac{r}{\epsilon})^2 + 1)^{-\frac{1}{2}}
\end{equation}
where $\epsilon \neq 0$ is a width parameter. We can obtain a naïve wavelet basis $h_j(x) = \rho\qty(\norm{x - x_j})$ using copies of $\rho$ offset by a collection of centroids $x_j \in \R$ along $\R$. In general, the centroids are parameters that could be optimised, but we fix them in this study. This parametrisation can be considered as a \emph{radial basis function neural network}. RBNNs are well-studied in function approximation and subsequently machine learning; we refer readers to \cite{chen1991orthogonal,park1991universal} for further details.
\end{ex}

\subsection{The Choice of Wavelet Basis}
The choice of basis functions determines the space of wavelet signatures and also the numerical stability of the basis function coefficients which serve as the wavelet signature parameters. The stability of the parametrisation depends on the graphs as much as the choice of wavelet basis $h_1, \ldots , h_m$. We can analyse the stability of a parametrisation $\parametrization$ by its the singular value decomposition
\begin{equation}\label{eq:svd}
   \parametrization = \sum_{k=1}^r \sigma_k \vb{u}_k \vb{v}_k^\intercal
\end{equation} 
where $\sigma_1, \ldots, \sigma_r$ are the non-zero singular values of the matrix, and $\vb{u}_k \in \R^\qty|V|$ and $\vb{v}_k \in \R^m$ are orthonormal sets of vectors respectively. If the distribution of singular values span many orders of magnitude, we say the parametrisation is \emph{ill-conditioned}. An ill-conditioned parametrisation interferes with the convergence of gradient descent algorithms on a loss function evaluated on wavelet signatures. We discuss the relationship between the conditioning of $\parametrization$ and the stability of gradient descent in detail in \Cref{rem:conditioning}. 

We empirically observe that the coefficients of a naïve choice of basis functions, such as Chebyshev polynomials or radial basis functions, are numerically ill-conditioned. In \Cref{fig:singular_values}, we can see that the singular values of radial basis function and Chebyshev polynomial parametrisations respectively are distributed across a large range on the logarithmic scale for some datasets of graphs in machine learning. We address this problem by picking out a new wavelet basis
\begin{equation} \label{eq:stable_basis}
    h_k'(x) = \frac{1}{\sigma_k}\sum_{j=1}^m \qty(\vb{v}_k)_j h_j(x) \qc k = 1,\ldots , r,
\end{equation}
where $\sigma_k$ are the singular values of $\parametrization$ and $\vb{v}_k$ are the associated vectors in $\R^m$ from the singular value decomposition of matrix $\parametrization$ in \cref{eq:svd}. Then the parametrisation $\parametrization' : \R^r \to \R^V$ 
\begin{equation} \label{eq:new_param}
      \parametrization' (\theta') = \sum_{k=1}^r \theta'_k W(h'_k).
\end{equation}
have singular values equal to one, since this is a linear combination of orthonormal vectors $\vb{u}_k \in \R^V$:
\begin{equation}
    W(h'_k) =  \sum_{j = 1}^m \frac{1}{\sigma_k} \qty(\vb{v}_k)_j W(h_j) = \frac{1}{\sigma_k}\parametrization \vb{v}_k = \vb{u}_k.
\end{equation}
As an example, we plot the new wavelet basis $h'_k$ derived from a twelve parameter radial basis function parametrisation for the \MUTAG dataset in \Cref{fig:mutag_basis} in \Cref{app:wav}.

\begin{remark}[Learning a Wavelet Basis for Wavelet Signatures on Multiple Graphs]
In the case where the wavelet coefficients parametrise the wavelet signatures over graphs $G_1,\ldots, G_N$, we can view the maps $F_1, \ldots, F_N$ that map wavelet basis coefficients to vertex functions of graphs $G_1, \ldots, G_N$ respectively as a parametrisation for the disjoint union $\bigsqcup_i G_i$:
\begin{equation} \label{eq:param_whole_dataset}
  f =  \bmqty{f_1 \\ \vdots  \\ f_N}  = \bmqty{\parametrization_1 \\ \vdots  \\ \parametrization_N}\theta =: \parametrization \theta .
\end{equation}
We can then perform a singular value decomposition of the parametrisation $\parametrization$ on $\bigsqcup_i G_i$ and derive a new, well-conditioned basis.
\end{remark}

\begin{remark}[Why the Conditioning of $\parametrization$ Matters]\label{rem:conditioning} Let us optimise a loss function $\loss$ on the parameter space of wavelet coefficients $\theta$ using a gradient descent algorithm. In a gradient descent step of step size~$s$, the wavelet coefficients are updated to~$\theta \mapsto \theta - s \grad_\theta \loss$. Using the singular value decomposition of $\parametrization$ (\cref{eq:svd}), we can write 
\begin{equation}\label{eq:theta_svd}
     \grad_\theta \loss = {\grad_\theta f}^\intercal    {\grad_f \loss} = \parametrization^\intercal {\grad_f \loss} = \sum_{k=1}^r \sigma_k \expval{\grad_f \loss, \vb{u}_k} \vb{v}_k.
\end{equation}
The change in the vertex function is simply the matrix $\parametrization$ applied to the change in wavelet parameters. Hence the vertex function is updated to~$ f \mapsto f - s \parametrization \grad_\theta \loss$, where 
\begin{equation} \label{eq:vertfunc_svd}
    \parametrization \grad_\theta \loss  = \sum_{k=1}^r \sigma_k \expval{\grad_f \loss, \vb{u}_k} \parametrization \vb{v}_k=  \sum_{k=1}^r \sigma_k^2 \expval{\grad_f \loss, \vb{u}_k} \vb{u}_k .
\end{equation}
If the loss function $\loss$ has large second derivatives\--- for example, due to nonlinearities in the function on persistence diagrams $\out: \Barc^4 \to \R$ \--- the projections $\expval{\grad_f \loss,\ \vb{u}_k}$ in \cref{eq:theta_svd,eq:vertfunc_svd} may change dramatically from one gradient descent update to another. If the smallest singular value is much smaller than the largest, then updates to the wavelet signature can be especially unstable throughout the optimisation process. This source of instability can be removed if we choose a parametrisation with uniform singular values $\sigma_k = 1$. In this case,  the update to $f$ is simply the projection of $\grad_f \loss$ onto the space of wavelet signatures spanned by $\vb{u}_1, \ldots ,  \vb{u}_r$, without any distortion introduced by non-uniform singular values:
\begin{equation} 
    f \mapsto f -  s \sum_{k=1}^r  \expval{\vb{u}_k,  \grad_f \loss} \vb{u}_k.  
\end{equation}
\end{remark}

\section{Extended Persistent Homology} 

\label{sec:eph} The homology of a given graph is a computable vector space whose dimension counts the number of connected components or cycles in the graph. Finer information can be retained by filtering the graph and analysing the evolution of the homology throughout the filtration. This evolution is described by a set of \emph{extended persistence diagrams} (a.k.a. \emph{extended barcodes}), a multiset of points $\expval{b,d}$ that record the birth $b$ and death of homological features in the filtration. In this section, we begin by summarising these constructions. We refer the reader to~\cite{zomorodian2005computing}, \cite{edelsbrunner2008persistent}, and \cite{cohen2007stability} for full treatments of the theory of Persistence. 

Compared to \emph{ordinary persistence}, extended persistence is a more informative and convenient feature map for graphs. Extended persistence encodes strictly more information than ordinary persistence. For instance, the cycles of a graph are represented as points with $d = \infty$ in ordinary persistence. Thus, only the birth coordinate $b$ of such points contain useful information about the cycles. In contrast, the corresponding points in extended persistence are each endowed with a finite death time $d$, thus associating extra information to the cycles. The points at infinity in ordinary persistence also introduce obstacles to vectorisation procedures, as often arbitrary finite cutoffs are needed to `tame' the persistence diagrams before vectorisation. \\

\subsection{Extended Persistent Homology} Let~$G=(V,E)$ be a finite graph without double edges and self-loops. For the purposes of this paper, the associated \emph{extended persistent homology} is a map \[\ExtPH[][][]: \R^V \to \Barc^{4}\] from functions $f \in \R^V$ on its vertices to the space of four \emph{persistence diagrams} or \emph{barcodes}, which we define below. The map arises from a \emph{filtration} of the graph, a sequential attachment of vertices and edges in ascending or descending order of~$f$. We extend~$f$ on each edge~$e=(v,v')$ by the maximal value of~$f$ over the vertices~$v$ and~$v'$, and we then let~$G_t \subset G$ be the sub graph induced by vertices taking value less than~$t$. Then we have the following sequence of inclusions:

\begin{equation}
\label{eq_ordinary_filtration}
\begin{tikzcd}[column sep= normal]
\emptyset \arrow[r] & \cdots \arrow[r] & G_s \arrow[r, "s \leq t"] & G_t \arrow[r] & \cdots \arrow[r]  & G.
\end{tikzcd}
\end{equation}

Similarly, the sub graphs $G^t \subset G$ induced by vertices taking value greater than~$t$ assemble into a sequence of inclusions:

\begin{equation}
\label{eq_relative_filtration}
\begin{tikzcd}[column sep=normal]
G & \arrow[l]  \cdots & \arrow[l] G^s  & \arrow[l, "s \leq t"'] G^t  & \arrow[l] \cdots  & \arrow[l]  \emptyset
\end{tikzcd}.
\end{equation}

The changes in the topology of the graph along the filtration in ascending and descending order of~$f$ can be detected by its \emph{extended persistence module}, indexed over the poset $\R \cup \qty{\infty} \cup \R[op]$:
\begin{equation}
\label{eq_extended_persistent_homology}
\persmod_{p}(f) :\
\begin{tikzcd}[column sep=small]
0 = H_p(\emptyset) \arrow[r] & \cdots \arrow[r] & H_p(G_s) \arrow[r, "s \leq t"] & H_p(G_t) \arrow[r] & \cdots \arrow[r]  & H_p(G) \arrow[d, "\cong"] \\
0 = H_p(G,G) & \arrow[l]  \cdots & \arrow[l] H_p(G, G^s)  & \arrow[l, "s \leq t"'] H_p(G, G^t)  & \arrow[l] \cdots  & \arrow[l]  H_p(G, \emptyset)
\end{tikzcd},
\end{equation}
where $H_p$ is the singular (relative) homology functor in degre~$p\in {0,1}$ with coefficients in a fixed field, chosen to be $\Z/2\Z$ in pratice. In general terms, the modules~$\persmod_{0}(f)$ and~$\persmod_{1}(f)$ together capture the evolution of the connected components and loops in the sub graphs of~$G$ induced by the function~$f$. 

Each module $\persmod_{p}(f)$ is completely characterised by a finite multi-set $\ExtPH[][f][p]$ of pairs of real numbers $\expval{b,d}$ called \emph{intervals} representing the birth and death of homological features. Following~\cite{cohen2009extending}, the intervals in~$\ExtPH[][f][p]$ are further partitioned according to the type of homological feature they represent:

\begin{equation}
    \ExtPH[][f][p]= \underbrace{\Sett{\expval{b,d}}[b<d<\infty ]}_{ = \ExtPH[ord][f][p]} \sqcup \underbrace{\Sett{\expval{b,d}}[b <\infty< d ]}_{ = \ExtPH[ext][f][p]} \sqcup \underbrace{\Sett{\expval{b,d}}[\infty<b<d ]}_{ = \ExtPH[rel][f][p]} .
\end{equation}

Each of the three finite multiset $\ExtPH[k][f][p]$, for $k\in \{\mathrm{ord},\mathrm{ext},\mathrm{rel}\}$, is an element in the space~$\Barc$ of so-called \emph{barcodes} or \emph{persistence diagrams}. However,~$\ExtPH[\mathrm{rel}][f][0]$ and~$\ExtPH[\mathrm{ord}][f][1]$ being trivial for graphs, we refer to the collection of four remaining persistence diagrams

\begin{equation}
    \ExtPH[][f][] = \qty[\ExtPH[\mathrm{ord}][f][0],\ExtPH[\mathrm{ext}][f][0],\ExtPH[\mathrm{ext}][f][1],\ExtPH[\mathrm{rel}][f][1] ] \in \Barc^{4}
\end{equation}

as the \emph{extended barcode} or \emph{extended persistence diagram} of~$f$. We have thus defined the \emph{extended persistence map} 
\[ \ExtPH[][][]: \R^V \to \Barc^{4}.\]

\begin{remark}
\label{remark_ordinary_persistence}
If we only apply homology to the filtration of~Eq.~\eqref{eq_ordinary_filtration}, we get an \emph{ordinary persistence module} indexed over the real line, which is essentially the first row in Eq.~\eqref{eq_extended_persistent_homology}. This module is characterised by a unique barcode~$\OrdPH_p(f)\in \Barc$. We refer to the map

\begin{equation}
\label{equation_ordinary_persistence}
\OrdPH: f\in \R^V \longmapsto \qty[\OrdPH_0(f),\OrdPH_1(f)] \in \Barc^{2}
\end{equation}
as the \emph{ordinary persistence map}.

\end{remark}

\subsection{Differentiability of Extended Persistence}

The extended persistence map can be shown to be locally Lipschitz by the Stability theorem~\cite{cohen2009extending}. The Rademacher theorem states that any real-valued function that is locally Lipschitz is differentiable on a full measure set. Thus, so is our loss function
\begin{equation} 
    \begin{tikzcd}
    \loss : \M  \arrow[r, "\parametrization"] & \R^{V} \arrow[r, "\ExtPH"] &\Barc^4 \arrow[r, "{\out}"] & \R
    \end{tikzcd}. \tag{\Cref{eq:pipeline} recalled}
\end{equation}
 as long as~$\out$ and~$\parametrization$ are smooth or locally Lipschitz\footnote{In practice, a locally Lipschitz $\out$ can be constructed out of Lipschitz stable vectorisation methods, such as Persistence Landscapes \cite{bubenik2015statistical} and Persistence Images \cite{adams2017persistence}.}. If a loss function~$\loss$ is locally Lipschitz, we can use stochastic gradient descent as a paradigm for optimisation. Nonetheless, the theorem above does not rule out dense sets of non differentiability in general.

In this section, we show that the set where~$\ExtPH[][][]$ is not differentiable is not pathological. Namely, we show that~$\ExtPH[][][]$ is \emph{generically} differentiable, i.e. differentiable on an open dense subset. This property guarantees that local gradients yield reliable descent directions in a neighbourhood of the current iterate. We recall from~\cite{leygonie2019framework} the definition of differentiability for maps to barcodes.

We call a map $\parametrization: \M \rightarrow \R^V$ a \emph{parametrisation}, as it corresponds to a selection of filter functions over~$G$ parametrised by the manifold~$\M$. Then~$B := \ExtPH \circ \parametrization$ is the barcode valued map whose differentiability properties are of interest in applications.

\begin{definition}
\label{definition_differentiability}
A map $B: \M \rightarrow \Barc$ on a smooth manifold $\M$ is said to be differentiable at $\param\in \M$ if for some neighbourhood $U$ of $\param$, there exists a finite collection of differentiable maps~\footnote{By convention, a differentiable map that takes the value $\infty$ is constant.} $b_i,d_i: U \rightarrow \R \cup \{\infty\}$, called a \emph{local coordinate system} for~$B$ at~$\param$, such that
\[\forall \param'\in U, \, \, B(\param')=  \qty{\expval{b_i(\theta'), d_i(\theta')}\,  | \, b_i(\theta') \neq d_i(\theta')}.\]

For $N\in \N$, we say that a map $B: \M \rightarrow \Barc^N$ is differentiable at $\param$ if all its components are so.

\end{definition}

In~\cite{leygonie2019framework}, it is proven that the composition $\OrdPH \circ \parametrization$ is generically differentiable as long as $\parametrization$ is so. It is possible to show that $\ExtPH[][][] \circ \parametrization$ is generically differentiable along the same lines, but we rather provide an alternative argument in the appendix. Namely, we rely on the fact that the extended persistence of~$G$ can be decoded from the ordinary persistence of the cone complex~$\Cone[G][]$, a connection first noted in~\cite{cohen2009extending} for computational purposes. 

\begin{proposition}
\label{proposition_extended_generically_differentiable}

Let $\parametrization: \M \rightarrow \R^V$ be a generically differentiable parametrisation. Then the composition $\ExtPH[][][] \circ \parametrization$ is generically differentiable.

\end{proposition}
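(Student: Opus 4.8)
The plan is to reduce the statement to the already-known generic differentiability of \emph{ordinary} persistence from \cite{leygonie2019framework}, using the cone construction of \cite{cohen2009extending}. Concretely, I would exhibit a fixed finite simplicial complex $\Cone[G][]$ --- the cone over $G$ with a single apex $\omega$, so that its vertex set is $\widetilde V = V \cup \{\omega\}$ --- together with an assignment $\Phi$ sending a vertex function $f\in\R^V$ to a filtration of $\Cone[G][]$, such that the ordinary persistence barcode $\OrdPH[\Phi(f)]$ of this filtration recovers the four diagrams $\ExtPH[][f][]$. In the ascending half the simplices $\sigma\subseteq G$ enter at $\max_{v\in\sigma} f(v)$, reproducing the sublevel filtration of \eqref{eq_ordinary_filtration}; in the descending half the coned simplices $\omega * \sigma$ enter in the reverse order dictated by $\min_{v\in\sigma} f(v)$, so that coning off the superlevel sets $G^t$ realises the relative part of \eqref{eq_extended_persistent_homology}. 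The content of \cite{cohen2009extending} is that $\OrdPH[\Phi(f)]$ then splits into the ordinary, extended, and relative sub-diagrams of $\ExtPH[][f][]$, the latter two carrying a degree shift coming from the cone and an affine reflection of the coordinates that lie in the coned half back into the value range of $f$.

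First I would verify that $\Phi$ is compatible with the differentiability formalism. Each simplex entrance time is one of the piecewise-linear functions $\max_{v\in\sigma} f(v)$ or a reflected $\min_{v\in\sigma} f(v)$, and on the open dense locus where no ties occur these are smooth --- indeed locally affine --- in $f$. Hence, restricting to the open dense set where $\parametrization$ itself is differentiable and the argmax and argmin defining the entrance times are attained uniquely, the composite $\Phi\circ\parametrization$ is a generically differentiable parametrisation of the cone complex. Applying the ordinary-persistence result of \cite{leygonie2019framework} to $\Phi\circ\parametrization$ then furnishes, near any such parameter, a local coordinate system of differentiable functions $b_i,d_i: U\to\R\cup\{\infty\}$ for the barcode $\OrdPH\circ\Phi\circ\parametrization$ in the sense of \Cref{definition_differentiability}.

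It remains to transport this local coordinate system across the dictionary of \cite{cohen2009extending} to obtain one for $\ExtPH[][][]\circ\parametrization$. Here I would argue that the partition of the barcode points into the four groups $\ExtPH[\mathrm{ord}][][0]$, $\ExtPH[\mathrm{ext}][][0]$, $\ExtPH[\mathrm{ext}][][1]$, $\ExtPH[\mathrm{rel}][][1]$ is \emph{locally constant} on $U$: the type of a point is determined by whether its birth and death values fall in the ascending or the coned half and by the homological degree of the generator that creates or destroys it, and on the non-degenerate locus neither the side of the threshold nor the degree of a paired simplex can change without forcing a tie. Consequently each $b_i,d_i$ keeps a fixed type throughout $U$, and composing it with the fixed affine reflection and degree relabelling prescribed by the cone dictionary yields differentiable coordinate functions for the four extended diagrams. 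This exhibits a local coordinate system for $\ExtPH[][][]\circ\parametrization$ on an open dense set, proving generic differentiability.

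The hard part will be the bookkeeping in this last step: making the cone dictionary precise enough that the type partition is provably locally constant, and checking that the affine coordinate transformation --- in particular the reflection, which a priori involves the transition threshold separating the two halves --- is genuinely differentiable rather than merely Lipschitz. I would handle the threshold by fixing, on each neighbourhood $U$, a constant cut-off that separates the two halves, which is legitimate since the image of $\parametrization$ is locally bounded. This makes the reflection have constant coefficients on $U$, so that the subtlety of $\max_v f(v)$ being only piecewise-smooth never enters the coordinate functions themselves.
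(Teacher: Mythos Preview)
Your proposal is correct and follows essentially the same route as the paper: reduce to ordinary persistence via the cone complex $\Cone[G]$ with entrance times $\max_{v\in\sigma} f(v)$ and $2R-\min_{v\in\sigma} f(v)$, invoke \cite[Theorem~4.9]{leygonie2019framework} for the ordinary map on the cone, and transport the resulting local coordinate system through the decoding map, which is a locally constant splitting followed by an affine transformation. Your handling of the threshold $R$ by fixing it locally is exactly what the paper does as well.
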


For completeness, the proof provided in the appendix treats the general case of a finite simplicial complex~$K$ of arbitrary dimension. 

\section{Binary graph classification} \label{sec:expts}
We investigate whether optimising the extended persistence of wavelet signatures can be usefully applied to graph classification problems, where persistence diagrams are used as features to predict discrete, real life attributes of networks.  In this setting, we aim to learn $\theta \in \M$ that minimise the classification error of graphs over a training dataset. 

We apply our wavelet optimisation framework to classification problems on the graph datasets \MUTAG \cite{1debnath1991structure,23}, \COX \cite{7sutherland2003spline}, \DHFR \cite{7sutherland2003spline}, \NCI \cite{8wale2008comparison, 22shervashidze2011weisfeiler}, \PRO \cite{4borgwardt2005protein, 6dobson2003distinguishing} and \IMDB \cite{14yanardag2015deep}. The former five datasets are biochemical molecules and \IMDB is a collection of social ego networks. In our models, we use persistence images \cite{adams2017persistence} as a fixed vectorisation method and use a feed forward neural network to map the persistence images to classification labels. We also include the eigenvalues of the graph Laplacian as additional features; model particulars are described in the sections below.

To illustrate the effect of wavelet optimisation on different classification problems,  we also perform a set of \emph{control experiments} where for the same model architecture, we fix the wavelet and only optimise the parameters of the neural network. The control experiment functions as a baseline against which we assess the efficacy of wavelet optimisation.
 
We benchmark our results with two existing persistence based architectures, \texttt{PersLay} \cite{carriere2020perslay} and \texttt{GFL} \cite{hofer2020graph}. \Perslay optimises the vectorisation parameters and use two heat kernel signatures as fixed rather than optimisable vertex functions for computing extended persistence. \GFL optimises and parametrises vertex functions using a graph isomorphism network \cite{xu2019graph}, and computes \emph{ordinary} sublevel and superlevel set persistence instead of extended persistence.  \\

\subsection{Model Architecture}
We give a high level description of our model and relegate details and hyperparameter choices of the vectorisation method and neural network architecture to \cref{app:expt}. In our setting, the extended persistence diagrams of the optimisable wavelet signatures for each graph are vectorised as persistence images. We also include the static persistence images of a \emph{fixed} heat kernel signature, $W(e^{-0.1x})$, as an additional set of features, alongside some non-persistence features. Both the optimised and static persistence diagrams are transformed into the persistence images using identical hyperparameters. We feed the optimisable and static persistence images into two separate convolutional neural networks (CNNs) with the same architecture. Similarly, we feed the non-persistence features as a vector into a separate multilayer perceptron. The outputs of the CNNs are concatenated with the outputs of the multi-layer perceptron. Finally, an affine transformation sends the concatenated vector to a real number whose sign determines the binary classification.
\\

\subsubsection{Wavelet Parametrisation}
We choose a space of wavelets spanned by 12 inverse multiquadric radial basis functions 
\begin{equation} \label{eq:rbf_actual}
    h_j(x) = \qty(\qty(\frac{x - x_j}{\epsilon})^2 + 1)^{-\frac{1}{2}}
\end{equation}
whose centroids $x_j$ are located at~$x_j = 2(j-1)/9,\ j = 0, \ldots ,11$. The width parameter is chosen to be the distance between the centroids, $\epsilon = 2/9$. On each dataset, we derive a numerically stable parametrisation using the procedure described in \Cref{sec:param}; the parameters we optimise are the coefficients of the new basis given by \cref{eq:stable_basis}. We initialise the parameters by fitting them via least squares to the heat kernel signature $W(e^{-10x})$ on the whole dataset of graphs. 
\\

\subsubsection{Non-Persistence Features}
\label{sec:non-pers}
We also incorporate the eigenvalues of the normalised Laplacian as additional, fixed features of the graph. Since the number of eigenvalues for a given graph is equal to the number of vertices, it differs between graphs in the same dataset. To encode the information represented in the eigenvalues as a fixed length vector, we first sort the eigenvalues into a time-series; we then compute the log path signature of the time series up to level four, which is a fixed length vector in $\R^8$. The log-signature captures the geometric features of the path; we refer readers to \cite{chevyrev2016primer} for details about path signatures.  For \IMDB in particular, we also include the maxima and minima of the heat kernel signatures $W(e^{-10x})$ and $W(e^{-0.1x})$ respectively of each graph. \\

\subsection{Experimental set up}
We employ a \emph{10 ten-fold} test-train split scheme on each dataset to measure the accuracy of our model. Each ten-fold is a set of ten experiments, corresponding to a random partition of the dataset into ten portions. In each experiment, a different portion is selected as the test set while the model is trained on the remaining nine portions. We perform 10 ten-folds to obtain a total of 10 $\times$ 10 experiments, and report the accuracy of the classifier as the average accuracy over 100 such experiments. The epochs at which the accuracies were measured are specified in \Cref{tab:our_full_results_epochs}.

Across all experiments, we use binary cross entropy as the loss function. We use the \texttt{Adam} optimiser \cite{kingma2014adam} with learning rate $\texttt{lr = 1e-3}$ to optimise the parameters of the neural network. The wavelet parameters are updated using stochastic gradient descent with learning rate $\texttt{lr = 1e-2}$, for all datasets except for \IMDB, where the learning rate is set to $\texttt{lr} = 1e-1$. The batch sizes for each experiment are shown in \Cref{tab:batchsize}. In all experiments, we stop the optimisation of wavelet parameters at epoch 50 while the neural network parameters continue to be optimised.

We use the \href{https://gudhi.inria.fr}{\texttt{GUDHI}} library to compute persistence, and make use of the optimisation and machine learning library \texttt{PyTorch} for the construction of the graph classifications models.\\

\subsection{Results and Discussion}

In \Cref{tab:our_full_results}, we present the classification accuracies of our model. For each dataset, we perform four experiments using our model, varying whether the wavelet parameter is optimised and whether additional features are included. In \Cref{tab:binary_class_results_picnn_new}, we show the test accuracy of our model alongside two persistence-based graph classification architectures, \Perslay and \GFL, as well as other state-of-the-art graph classification architectures. 

We first compare the performances of our model between cases where we optimise and fix the wavelets. In \Cref{tab:our_full_results}, we see that on \MUTAG and \DHFR, optimising the wavelet improves the classification accuracy regardless of whether extra features are included. On \NCI, wavelet optimisation improves the classification accuracy only only persistence features are included. When we include non-persistence features in the model, the performances of the optimised and control models are statistically indistinguishable for \NCI, suggesting that the non-persistence features play a more significant role in the classification. As for \COX, \PRO, and \IMDB, optimising the wavelet coefficients do not bring about statistically significant improvements. This indicates that the initial wavelet signature \--- the heat kernel signature $W(e^{-10x})$ \--- is a locally optimal choice of wavelet for our neural network classifier.

We now compare our architecture to other persistence based architectures, \Perslay and \GFL, where node attributes are excluded from their vertex function models. Except on \PRO, our wavelet optimised model matches or exceeds \Perslay. While our model architecture and choice of wavelet initialisation is similar to that of \Perslay, we differ in two important respects. \Perslay fixes the vertex functions but optimises the weights assigned to points on the persistence diagrams, as well as the parameters of the persistence images. Our improvements on \Perslay for \MUTAG, \DHFR, and \IMDB indicate that vertex function optimisation yields improvements that cannot be obtained through vectorisation optimisation alone on some datasets of graphs. 

Compared to \GFL (without node attributes), both \Perslay and our architecture achieves similar or higher classification accuracies on \PRO and \NCI. This supports wavelet signatures being viable models for vertex functions on those datasets. On the other hand, both \Perslay and our model lag behind \GFL on \IMDB. We attribute this to the fact that \IMDB, unlike the other bionformatics datasets, consists of densely connected graphs. The graphs in \IMDB have diameter at most two and 14\% of the graphs are cliques. This fact has two consequences. First, we expect the one-layer GIN used in \GFL \--- a local topology summary \--- to be highly effective in optimising for the salient features of a graph with small diameter. Second, the extended persistence modules for cliques have zero persistence, since all vertices are assigned the same function value due to symmetry. In contrast, ordinary persistence used in \GFL is able to capture the cycles in a complete graph as points with infinite persistence. 

Compared to non-persistence state-of-the-art architectures in \Cref{tab:our_full_results}, our model achieves competitive accuracies on \MUTAG, \COX , and \DHFR. For \NCI and \PRO, all persistence architectures listed that exclude additional node attributes perform poorly in comparison, though \texttt{PWL} was able to achieve leading results with node attributes.  

All in all, we observe that wavelet signatures can be an effective parametrisation of vertex functions when we use extended persistence as features for graph classification. In particular, on some bioinformatics datasets, we show that optimising the wavelet signature can lead to improvements in classification accuracy. The wavelet signature approach is complementary to the \GFL approach to vertex function parametrisation as they show strengths on different datasets.

\section{Conclusion}

We have presented a framework for equipping any graph $G$ with a set of extended persistence diagrams $\ExtPH \circ \parametrization :\M \to \Barc^4$ parametrised over a manifold $\M$, a parameter space for the graph's wavelet signature. We described how wavelet signatures can be parametrised and interpreted. Given a function on extended persistence diagrams $\out: \Barc^4 \to \R$ that is differentiable, we have shown how a loss function $\loss = \out \circ \ExtPH \circ \parametrization$ can be generically differentiable with respect to $\theta \in \M$ as $\loss$. Thus, we can apply gradient descent methods to optimise the extended persistence diagrams of a graph to minimise $\loss$. 

We applied this framework to a graph classification architecture where the wavelet signature is optimised for classification accuracy. We are able to demonstrate an increase in accuracy on several benchmark datasets where the wavelet is optimised, and perform competitively with state-of-the-art persistence based graph classification architectures.

\section*{Funding}
Ka Man Yim is funded by the EPSRC Centre For Doctoral Training in Industrially Focused Mathematical Modelling (EP/L015803/1) with industrial sponsorship from Elsevier. Jacob Leygonie is funded by the EPSRC grant EP/R513295/1.  Both authors are members of the Centre for Topological Data Analysis, which is supported by the EPSRC grant New Approaches to Data Science: Application Driven Topological Data Analysis EP/R018472/1.

\section*{Acknowledgments}
 The authors would like to thank Ulrike Tillmann and Heather Harrington for their close guidance and thoughtful advice on this project. In addition, the authors would like to thank Vidit Nanda, Peter Grindrod CBE, Andrew Mellor, Steve Oudot, Mathieu Carrière, and Theo Lacombe for fruitful discussions on this subject. Finally, we are indebted to the reviewers for their thoughtful and constructive comments, which led to many improvements of the paper.

\section*{Data Availability Statement}
The code for the computational experiments in \cref{sec:expts} can be found in the GitHub repository \url{https://github.com/kmyim/Persistence_Opt_Spectral_Wavelets}. The datasets we use are publicly available at the repository TUDatasets \url{https://chrsmrrs.github.io/datasets/} \cite{Morris+2020}. 

\begin{table}[!h]
\centering 
\begin{tabular}{ | l | c | c | c| c ||  c | c | c | c || c  c|} \hline

 & \multicolumn{4}{c||}{\tiny Non-Persistence State-of-the-Art} & \multicolumn{6}{c|}{\tiny Persistence Based}\\ \hline 
   & {\tiny \texttt{P-SAN}} & {\tiny \texttt{RetGK}} & \tiny \texttt{GIN} & {\tiny \texttt{FGSD}} & \tiny \texttt{PWL} & \multicolumn{2}{c|}{\tiny \GFL} & \tiny{\Perslay} &  \tiny{Control} &  \tiny{Wavelet Opt.} \\ 
      & {\tiny \cite{psan}} & {\tiny \cite{retgk}} & \tiny \cite{xu2019graph} & {\tiny \cite{fgsd}} & \cite{rieck2019persistent} &  \multicolumn{2}{c |}{\tiny  \cite{hofer2020graph}} & \tiny{\cite{carriere2020perslay}} & \multicolumn{2}{c|}{\tiny  This paper}  \\ \hline 
Node attr. & \multicolumn{2}{c|}{\tiny Yes} & \multicolumn{2}{c||}{\tiny No} & \multicolumn{2}{c|}{\tiny Yes}  &  \multicolumn{4}{c|}{\tiny No} \\ \hline 
\texttt{MUTAG}   & {92.6}& \accerr{90.3}{1.1}  & 89.4 & 92.1 & \accerr{{90.5}}{1.3} & \---  & \---  & \accerr{89.8}{0.9} & 89.0\tiny{$\pm$0.6} & \textbf{90.4}\tiny{$\pm$1.3}  \\
\texttt{COX2}  & \--- & \accerr{{81.4}}{0.6}  & \--- & \---  & \--- & \---  & \---   & \accerr{\textbf{80.9}}{1.0} & 80.8\tiny{$\pm$0.4} & 80.8\tiny{$\pm$1.0}\\
\texttt{DHFR} &  \--- &  \accerr{{82.5}}{0.8}  & \--- & \--- & \--- & \---  & \--- & \accerr{80.3}{0.8} & 80.0\tiny{$\pm$0.4} & {\textbf{81.0}}\tiny{$\pm$0.9}\\
\texttt{NCI1} & 78.6 & \accerr{84.5}{0.2} & 82.7 & 79.8 & \accerr{{85.6}}{0.3} & 77.2 & 71.2 & \accerr{73.5}{0.3}  &  74.3\tiny{$\pm$0.3}  & \textbf{74.4}\tiny{$\pm$0.3}   \\
\texttt{PROTEINS} & 75.9 & \accerr{{78.0}}{0.3} & 76.2 & 73.4 & \accerr{{75.9}}{0.8} & 73.4 & 74.1 &  \accerr{\textbf{74.8}}{0.3} &  74.5\tiny{$\pm$0.4}  &  74.6\tiny{$\pm$0.6} \\
\texttt{IMDB-B} & 71.0 & \accerr{72.3}{0.6} & {75.1} & 73.6 & \accerr{73.0}{1.0} & \--- & \textbf{74.5} & \accerr{71.2}{0.7} &  71.6\tiny{$\pm$0.9}  &  72.0\tiny{$\pm$0.7}   \\\hline 
\# Ten-folds & 10 & 10 & 1 & 1 & 10 &1 & 1 & 10 &  10  & 10  \\\hline 
\end{tabular}
\bigskip
\caption{Binary classification accuracy on datasets of graphs. The best accuracy of persistence-based models \emph{without} using node attributes is made bold for each dataset. The performance of our model is reported in the column Wavelet Opt. on the right hand side. The accuracies of the control model, where the wavelet parameters are fixed to the initial values, are shown in the column Control. Both these models use additional features (see \Cref{sec:non-pers}). The accuracies of our model are the means over 10 ten-folds, recorded at epochs reported in \Cref{tab:our_full_results_epochs}. We also provide the standard deviations of the 10 mean accuracies of each ten-fold. For other architectures, we indicate whether their accuracies were reported as averages over 1 ten-fold or 10 ten-fold in the bottom row of the table. To avoid confusion, we leave out the errors reported for \texttt{P-SAN}, \texttt{GIN} and \GFL and refer the reader to the original sources, as they were calculated using a different formula. Errors were not reported in \cite{fgsd} for \texttt{FGSD}.  \label{tab:binary_class_results_picnn_new}}
\end{table}

\begin{table}[!h]
\centering
\begin{tabular}{ | l | c  c  | c  c |} \hline

 & \multicolumn{2}{c|}{\tiny Persistence Only} & \multicolumn{2}{c|}{\tiny Non-Persistence Features incl.}\\ \hline 
 &\tiny Control & \tiny Wavelet Opt. & \tiny Control & \tiny Wavelet Opt. \\ \hline 

\texttt{MUTAG} & \accerr{89.2}{0.6}  &\accerr{89.8}{0.8} & \accerr{89.0}{0.6} &\accerr{90.4}{0.4} \\
\texttt{COX2} & \accerr{79.6}{1.0} & \accerr{79.4}{0.7} &\accerr{80.8}{1.0} & \accerr{80.8}{1.0}  \\
\texttt{DHFR} &\accerr{79.9}{0.4} & \accerr{80.4}{0.4} &\accerr{80.3}{0.9} & \accerr{81.0}{0.9}  \\
\texttt{NCI1} &\accerr{73.7}{0.2} & \accerr{74.3}{0.5} &\accerr{74.3}{0.3} & \accerr{74.4}{0.3}  \\
\texttt{PROTEINS} &\accerr{72.9}{0.3} & \accerr{73.0}{0.4} &\accerr{74.5}{0.4} & \accerr{74.6}{0.6}  \\
\texttt{IMDB-B} &\accerr{68.3}{0.5} & \accerr{68.6}{0.7} &\accerr{71.6}{0.9} & \accerr{72.0}{0.7}  \\ \hline 
\end{tabular}
\bigskip
\caption{Binary classification accuracy of our model where we vary whether non-Persistence features are included and whether the wavelet is optimised. The reported accuracies are the mean over 10 ten-folds, recorded at epochs reported in \Cref{tab:our_full_results_epochs}. We also provide standard deviations of the 10 mean accuracies of each ten-fold. See \Cref{sec:non-pers} for the particulars about the non-persistence features. \label{tab:our_full_results}}
\end{table}

\begin{table}[!h]
\centering 
\begin{tabular}{ | l | c  c  | c  c |} \hline

 & \multicolumn{2}{c|}{\tiny Persistence Only} & \multicolumn{2}{c|}{\tiny Non-Persistencee Features incl.}\\ \hline 
  & \tiny Control & \tiny Wavelet Opt. &\tiny Control & \tiny Wavelet Opt. \\ \hline 

\texttt{MUTAG} & 25 & 125 & 25 & 75 \\
\texttt{COX2} & 50 & 50 & 25 & 25 \\
\texttt{DHFR} & 125 & 250 & 125 & 45  \\
\texttt{NCI1} & 270 & 270 & 500 &  370 \\
\texttt{PROTEINS} & 50 & 50 & 125 & 125 \\
\texttt{IMDB-B} & 100  & 25  & 75 & 50 \\ \hline 
\end{tabular}
\bigskip
\caption{Epochs at which accuracies in \Cref{tab:our_full_results} are recorded.  \label{tab:our_full_results_epochs}}
\end{table}

\begin{table}[h!]
\centering
\begin{tabular}{| l| lllll|}
\hline
           & \MUTAG & \COX & \DHFR & \NCI & \IMDB \\ \hline
\# graphs  & 188   & 467  & 756  & 4110 & 1000 \\
batch size & 10    & 9   & 11   & 20  & 50 \\ \hline 
\end{tabular}
\bigskip
\caption{Batch sizes in the graph classification experiments for different datasets described in \cref{sec:expts}. \label{tab:batchsize}}
\end{table}
\newpage
\newpage
\printbibliography
\newpage
\begin{appendices}

\section{Differentiability of the extended persistence map}\label{app:diff}

Let $K$ be a finite simplicial complex with vertex set~$V$ and dimension~$d\in\N$. A vertex function~$f\in \R^V$ extends to the whole complex via~$f(\simplex):=\max_{v\in \simplex} f(v)$. Filtrations, persistence modules and barcodes are then defined analogously to the case of a graph. The extended barcode of a function~$f$ now consists of~$3(d+1)$ barcodes:
\begin{equation}
    \ExtPH[][f][] = \qty[\{\ExtPH[k][f][p]\}_{k\in \{\mathrm{ord},\mathrm{ext},\mathrm{rel}\}}]_{p=0}^d \in \Barc^{3(d+1)}.
\end{equation}

We then have the extended persistence map 
\[ \ExtPH[][][]: \R^V \to \Barc^{3(d+1)},\]
and the ordinary persistence map as in remark~\ref{remark_ordinary_persistence}
\[\OrdPH: f\in \R^K \longmapsto \qty[\OrdPH_p(f)]_{p=0}^d  \in \Barc^{d+1}.\]

\begin{proposition}
\label{proposition_extended_generically_differentiable_appendix}
Let~$K$ be a finite simplicial complex, and let $\parametrization: \M \rightarrow \R^V$ be a generically differentiable parametrisation. Then the composition $\ExtPH[][][] \circ \parametrization$ is generically differentiable.
\end{proposition}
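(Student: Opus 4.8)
The plan is to reduce the statement to the already-established generic differentiability of \emph{ordinary} persistence \cite{leygonie2019framework} by passing to the cone complex $\Cone[K][]$, following the computational reduction of \cite{cohen2009extending}. Concretely, I would fix an apex $\omega$ and realise the extended persistence of $f = \parametrization(\theta)$ as the ordinary persistence of a single filter function $\hat f$ on the simplices of $\Cone[K][] = K \cup (\omega * K)$: on the simplices of $K$ one keeps the ascending values $\hat f(\simplex) = \max_{v \in \simplex} f(v)$, while on the coned simplices $\omega * \simplex$ one assigns the reflected descending values $\hat f(\omega * \simplex) = 2M - \min_{v \in \simplex} f(v)$, where $M = \max_v f(v)$. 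The ordinary barcode $\OrdPH(\Cone[K][], \hat f)$ then splits into three groups of intervals --- those lying below $M$, those straddling $M$, and those lying above $M$ --- which, after applying the fixed affine reparametrisation $\tau$ sending a coordinate $c > M$ to $2M - c$ together with a fixed relabelling of homological degrees, reproduce exactly the ordinary, extended, and relative parts of $\ExtPH[][f][]$ in each degree. Establishing this dictionary precisely is the first step.

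Next I would verify that the assignment sending $\theta$ to the cone filter $\hat f$ of $f = \parametrization(\theta)$, viewed as a map from $\M$ into the space of simplexwise filter values on $\Cone[K][]$, is generically differentiable. Away from a closed, nowhere-dense ``wall'' set --- the union of the locus where $\parametrization$ fails to be differentiable with the loci where some relevant extremum $\max_{v \in \simplex}$, $\min_{v \in \simplex}$, or the global maximum $M$ is attained non-uniquely --- every such extremum is locally attained at a unique, locally constant vertex, so each simplex value is locally a smooth expression in the differentiable coordinates of $\parametrization(\theta)$, and $M(\theta)$ is itself locally smooth. The genericity (open density) of this set is exactly the kind of condition handled in the framework of \cite{leygonie2019framework}, to which I would defer for the technical open-density argument.

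With these two ingredients I would conclude as follows. By the ordinary-persistence differentiability result applied at the level of simplexwise filtrations, the composite $\theta \mapsto \OrdPH(\Cone[K][], \hat f)$ is generically differentiable: near a generic $\theta$ it admits a local coordinate system of differentiable birth and death functions $b_i, d_i : U \to \R \cup \{\infty\}$ in the sense of \Cref{definition_differentiability}. Because the combinatorial type of the filtration --- in particular the partition of intervals into the three groups relative to the threshold $M(\theta)$ --- is locally constant on $U$, and because $\tau$ is smooth wherever $M$ is (namely on $U$), the transformed functions $\tau \circ (b_i, d_i)$ remain differentiable and constitute a local coordinate system for each of the three corresponding components of $\ExtPH[][][] \circ \parametrization$. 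Shrinking $U$ so that all genericity conditions hold simultaneously exhibits $\ExtPH[][][] \circ \parametrization$ as differentiable at every $\theta$ in an open dense set, which is the claim.

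I expect the main obstacle to be the first step: pinning down the correspondence between the ordinary barcode of $\Cone[K][]$ and the three pieces of the extended barcode in a form that is \emph{differentiably rigid}, i.e.\ with a locally constant interval-to-piece assignment and a smooth coordinate change, and in particular checking that the cone filter $\hat f$ --- which is genuinely a simplexwise filtration rather than the max-extension of a vertex function on $\Cone[K][]$ --- still falls within the scope of the ordinary-persistence differentiability theorem. The remaining genericity bookkeeping, namely intersecting the finitely many open dense sets produced above, is routine once the framework of \cite{leygonie2019framework} is invoked.
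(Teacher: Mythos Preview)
Your proposal follows the same cone-complex reduction as the paper, and the overall strategy is correct. There is, however, one avoidable complication in your version: you take the reflection threshold to be the parameter-dependent value $M = \max_v f(v)$, whereas the paper fixes a constant $R > \sup_{\simplex} |\parametrization(\theta)(\simplex)|$ once and for all (working locally on a compact neighbourhood in $\M$ if necessary) and sets $\hat f_R(\simplex) = f(\simplex)$, $\hat f_R(\simplex \cup \{\omega\}) = 2R - \min_{v \in \simplex} f(v)$, $\hat f_R(\omega) = -R$. With a fixed $R$, every value on $K$ is strictly below $R$ and every value on a coned simplex is strictly above $R$, so the three-way split of the ordinary barcode is determined by inequality against a constant and is automatically locally constant; moreover the decoding map $\mathrm{Dec}_R$ is a \emph{fixed} affine transformation on each piece, independent of $\theta$. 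This removes the need to argue separately that $M(\theta)$ is locally smooth, that the interval-to-piece assignment is locally constant near the threshold (where your $M$ collides with actual bar endpoints, e.g.\ the vertex realising the maximum), and that your $\theta$-dependent $\tau$ preserves differentiability of the local coordinate system.

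Two small points you will want to fill in either way: you need to assign a value to the apex $\omega$ itself (the paper puts it at $-R$), and the ordinary barcode of the cone carries one spurious infinite interval coming from non-reduced $H_0$, which the decoding must discard. Your concern that $\hat f$ is a genuine simplexwise filtration rather than a lower-star extension of a vertex function is well placed; the paper handles this by first restricting to the generic open set $\M' \subseteq \M$ where $\parametrization$ is differentiable and induces a locally constant pre-order on simplices of $K$, on which each $\min_{v \in \simplex}$ and $\max_{v \in \simplex}$ is locally a coordinate projection and hence $\theta \mapsto \hat f_R$ is smooth into $\R^{\Cone[K]}$, and then invokes \cite[Theorem~4.9]{leygonie2019framework} at the simplexwise level on $\Cone[K]$.
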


In particular, taking the parameter space~$\M$ to be the space~$\R^V$ of vertex functions, we obtain the generic differentiablility of the extended persistence map~$\ExtPH[][][]$ itself. Note that, however, we could not have directly deduced the generic differentiability of any composition of the form~$\ExtPH[][][] \circ \parametrization$ from the generic differentiability of~$\ExtPH[][][]$. This is due to the fact that the image of a parametrisation~$\parametrization$ might lie in the set where~$\ExtPH[][][]$ is not differentiable. 

The idea of our proof is to view the extended persistence of a vertex function~$f\in \R^V$ as the ordinary persistence of an extension of~$f$ over the cone complex~$\Cone[K]$. We note that this point of view has proven to be particularly useful for computing extended persistence in practice~\cite{cohen2009extending}. The relationship between~$\ExtPH[][][]$ and~$\OrdPH[]$ can be described by a commutative diagram:

\begin{equation*}
\label{eq_diagram_connection_ordinary_extended_no_maps}
\begin{tikzpicture}

\node[] (S) at (-2,2) {$\R^V$};
\node[] (D) at (-2,0) { $\R^{\Cone[K]}$};

\node[] (DeltaS) at (2,2) {$\Barc^{3}$};
\node[] (DeltaD) at (2,0) {$\Barc$};

\draw[->] (S)--(D);
\draw[->] (DeltaD)--(DeltaS);
\draw[->] (S)--(DeltaS) node[midway, above]{$\ExtPH$};
\draw[->] (D)--(DeltaD) node[midway, above]{$\OrdPH$};
\end{tikzpicture}
\end{equation*}
whose vertical maps are differentiable. Thus, we can deduce the differentiability of the extended persistence map~$\ExtPH$ from the results of~\cite{leygonie2019framework} about the ordinary persistence map~$\OrdPH$.

\begin{proof}[Proof of Proposition~\ref{proposition_extended_generically_differentiable_appendix}]

Let $f\in \R^V$ be a vertex function. Let~$K^t$ (resp. $K_t$) be the maximal sub complexes of~$K$ induced by vertices taking values greater (resp less) than $t$. For $0\leqslant p\leqslant d$, the associated $p$-th extended persistent homology module~$\persmod_p(f)$ is:

\begin{equation}
\label{eq_extended_persistent_homology_complex}
\begin{tikzcd}[column sep=small]
0 = H_p(\emptyset) \arrow[r] & \cdots \arrow[r] & H_p(K_s) \arrow[r, "s \leq t"] & H_p(K_t) \arrow[r] & \cdots \arrow[r]  & H_p(K) \arrow[d, "\cong"] \\
0 = H_p(K,K) & \arrow[l]  \cdots & \arrow[l] H_p(K,K^s)  & \arrow[l, "s \leq t"'] H_p(K,K^t)  & \arrow[l] \cdots  & \arrow[l]  H_p(K,\emptyset)
\end{tikzcd}.
\end{equation}
As such,~$\persmod_p(f)$ is a module indexed over the extended real line~$\R \sqcup \{\infty\}\sqcup \R^{\mathrm{op}}$. We construct an equivalent module~$\persmod_{p,R}(f)$ over the simpler, compact poset~$[-R;3R]$, where~$R>0$ is a large enough constant chosen hereafter. For this, we consider the poset map that collapses~$\R \sqcup \{\infty\}\sqcup \R^{\mathrm{op}}$ onto $[-R;3R]$ as in fig.~\ref{fig:posetmap}. Formally, the poset map is defined on~$\R$ as the canonical retraction onto~$[-R;R]$, on~$\R^{\mathrm{op}}$ as the symmetry $t\mapsto 2R-t$ followed by the canonical retraction onto~$[-R;R]$, and sends the point~$\infty$ to~$R$.
\begin{figure}[h!]
    \centering
     \includegraphics[width=\textwidth]{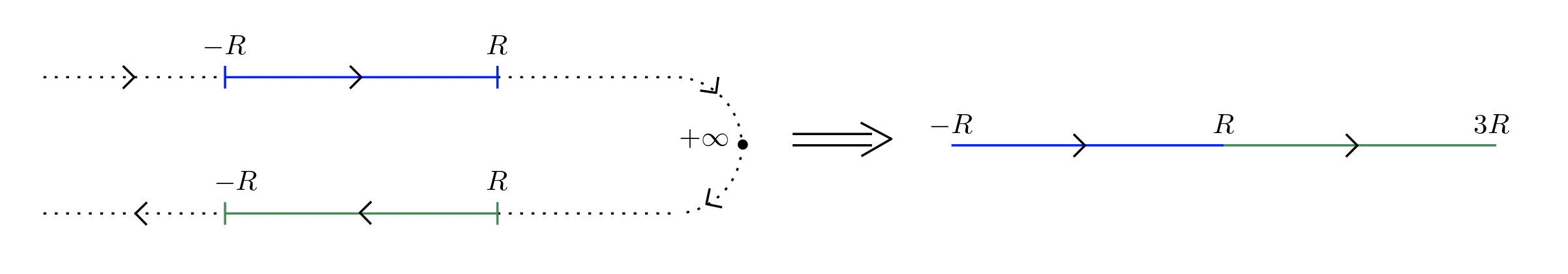}
    \caption{Collapsing the dotted part of the left poset yields the compact poset on the right.}
    \label{fig:posetmap}
\end{figure}

If we choose $R>\sup_{\simplex}|f(\simplex)|$, then no new simplex enters the subcomplexes~$K_t$ or~$K^t$ for~$t\notin [R;R]$ and~$t\notin [R;R]^{\mathrm{op}}$. Hence, the module~$\persmod_p(f)$ is locally constant outside of~$[R;R]$ and~$[R;R]^{\mathrm{op}}$. By applying the (inverse of the) poset map described above, we thus get a $[-R;3R]$-indexed module~$\persmod_{p,R}(f)$:
\begin{equation}
\label{eq_extended_persistent_homology_shrink}
\begin{tikzcd}[column sep=small]
0 = H_p(K_{-R}) \arrow[r] & \cdots \arrow[r] & H_p(K_s) \arrow[r, " s \leq t "] & H_p(K_t) \arrow[r] & \cdots \arrow[r]  & H_p(K_R) \arrow[d, "\cong"] \\
0 = H_p(K,K^{-R}) & \arrow[l]  \cdots & \arrow[l] H_p(K,K^{2R-t})  & \arrow[l, " t \geq s "'] H_p(K, K^{2R-s})  & \arrow[l] \cdots  & \arrow[l]  H_p(K;K^R)
\end{tikzcd}.
\end{equation}

The extended module module~$\persmod_{p}(f)$ is essentially equivalent to the ordinary module~$\persmod_{p,R}(f)$, since we can retrieve the extended barcode~$\ExtPH[][f][p]$ of~$\persmod_{p}(f)$ from the barcode of~$\persmod_{p,R}(f)$ as follows:

\begin{itemize}
    \item Each interval $\expval{b,d}$ in the barcode of~$\persmod_{p,R}(f)$ such that $b\leqslant d <R$ yields an interval $\expval{b,d}$ in $\ExtPH[ord][f][p]$;
    \item Each interval $\expval{b,d}$ in the barcode of~$\persmod_{p,R}(f)$ such that $b<R<d$ yields an interval $\expval{b,2R-d}$ in $\ExtPH[ext][f][p]$;
    \item Each interval $\expval{b,d}$ in the barcode of~$\persmod_{p,R}(f)$ such that $R<b\leqslant d$ yields an interval $\expval{2R-b,2R-d}$ in $\ExtPH[rel][f][p]$.
\end{itemize}

We denote this decoding map by $\mathrm{Dec}_R: \Barc \rightarrow \Barc^{3}$. We next take advantage of working with the ordinary module~$\persmod_{p,R}(f)$ by viewing it as the sub level set persistent homology module of a function defined on the cone~$\Cone[K]$.  

Note that the relative homology groups of~$\persmod_{p,R}(f)$ in the second row of Eq.~\eqref{eq_extended_persistent_homology_shrink} may be replaced with ordinary (reduced) homology groups of the cones $\Cone[K^{2R-t}]$ using the functorial isomorphism:
\[ H_p(K,K^{2R-t})\cong \tilde{H}_p(K/K^{2R-t})\cong \tilde{H}_p(K\cup \Cone[K^{2R-t}]).\] 
We denote by~$\omega$ the distinguished vertex of such cones. It is then clear that~$\persmod_{p,R}(f)$ equals the ordinary $p$-th sub level set persistent (reduced) homology module of the function $\hat{f}_R: \Cone[K]\rightarrow \R$ defined by 
\[\hat{f}_R(\simplex):=f(\simplex) \text{     and        } \hat{f}_R(\simplex\sqcup \{\omega\}):=2R-\min_{v \text{ vertex in } \simplex} f(v)\]
for any simplex $\simplex\in K$, and $\hat{f}_R(\omega):=-R$ by convention. Plugging these constructions together, we connect the ordinary and extended maps in the commmutative diagram:\footnote{Strictly speaking, the decoding map should furthermore forget the unique unbounded interval~$\expval{b,+\infty}$ in the barcode~$\OrdPH[\hat{f}_R]$, since the ordinary persistence map~$\OrdPH[]$ computes the barcode of a module made of non-reduced homology groups.} 
\begin{equation}
\label{eq_diagram_connection_ordinary_extended}
\begin{tikzpicture}

\node[] (S) at (-2,3) {$\R^V$};
\node[] (D) at (-2,0) { $\R^{\Cone[K]}$};

\node[] (DeltaS) at (2,3) {$\Barc^{3}$};
\node[] (DeltaD) at (2,0) {$\Barc$};

\draw[->] (S)--(D);

\draw[->] (DeltaD)--(DeltaS) node[midway, right] {$\mathrm{Dec}_R$};
\draw[->] (S)--(DeltaS) node[midway, above]{$\ExtPH$};
\draw[->] (D)--(DeltaD) node[midway, above]{$\OrdPH$};
\node[] (f) at (-1.5,2.7) {$f$};
\node[] (hatf) at (-1.5,0.6) {$\hat{f}_R$};
\draw[|->] (f)--(hatf);
\node[] (M) at (-6,1.5) {$\M$};
\draw[->] (M)--(S) node[midway, above] {$\parametrization$};
\draw[->, dotted] (M)--(D); 
\node[] (theta) at (-5.8,1) {$\theta$};
\node[] (FRtheta) at (-3.6,0.15) {$\hat{\parametrization(\theta)}_R$};
\draw[|->] (theta)--(FRtheta);
\end{tikzpicture}
\end{equation}
Note that this diagram only makes sense for parameters~$\param$ such that~$\hat{\parametrization(\param)}_R$ is a function whose sub level sets are sub complexes of~$\Cone[K]$, as~$\OrdPH[\hat{\parametrization(\param)}_R]$ is undefined otherwise. This requirement is satisfied whenever the inequality $\sup_{\simplex} |\parametrization(\param)(\simplex)|<R$ holds. For simplicity, we assume that~$R$ can be chosen large enough for the inequality to hold for all parameters~$\param$, hence the diagram~\eqref{eq_diagram_connection_ordinary_extended} makes sense globally on~$\M$. One can always avoid this restriction by working locally on compact neighbourhoods in~$\M$.  

From~\cite[Theorem~4.9]{leygonie2019framework}, the subset~$\M'\subseteq \M$ where the parametrisation~$\parametrization$ is differentiable and induces a locally constant pre-order on simplices of~$K$ is a generic sub manifold. In turn, all the maps $\param \mapsto \min_{v\in \simplex} \parametrization(\param)(v)$ and $\param \mapsto \max_{v\in \simplex} \parametrization(\param)(v)$, for $\simplex\in K$ a simplex, are differentiable over~$\M'$. Therefore $\hat{\parametrization}_R:\M \rightarrow \R^{\Cone[K]}$ is differentiable over the generic submanifold~$\M'$.

Since~$\hat{\parametrization}_R$ is generically differentiable, so is $\OrdPH \circ \hat{\parametrization}_R$~\cite[Theorem~4.9]{leygonie2019framework}, i.e. we generically have local coordinate systems as in Def.~\ref{definition_differentiability} . Since the decoding map $\mathrm{Dec}_R$ in diagram~\eqref{eq_diagram_connection_ordinary_extended} merely applies an affine transformation to the local coordinate systems and then splits them into three parts (the splitting is constant), we obtain local coordinate systems for~$\ExtPH\circ F$. Therefore, $\ExtPH\circ F$ is generically differentiable.
\end{proof}

\section{The Wavelet Signature is Well-defined} \label{app:wav}
In \cref{def:wavelet}, we defined the wavelet signature using the eigenvalues and eigenvectors of a graph Laplacian $L$. The wavelet signature is only well defined if it is independent of the choice of eigenbasis for $L$, where ambiguity could occur if $L$ has eigenvalues with multiplicity\footnote{As $L$ is symmetric and hence diagonalisable, the geometric and algebraic multiplicities of its eigenvalues agree.} greater than one. 
\waveletprop*

\begin{proof} Let $\mathsf{Spec}(L) \subset \R$ denote the spectrum of $L$ and  $\vb*{\phi}_1, \ldots \vb*{\phi}_{\qty|V|}$ be a set of orthonormal eigenvectors of $L$. Let us denote $\Phi(\lambda)$ to be a $\qty|V| \times m$ matrix where $m$ corresponds to the geometric multiplicity of $\lambda$, and the $m$ column vectors of $\Phi(\lambda)$ correspond to eigenvectors $\vb*{\phi}_{i_1},\ldots, \vb*{\phi}_{i_m}$ with eigenvalue $\lambda$. Then we can rewrite the wavelet signature \cref{eq:WSdef} as  
\begin{equation} \label{eq:WSrewritten}
    W(g)_v = \sum_{i=1}^{\qty|V|} g(\lambda_i) \qty(\vb*{\phi}_{i})_v^2 = \sum_{\lambda \in \mathsf{Spec}(L)} g(\lambda) \qty(\Phi(\lambda) \Phi(\lambda)^\intercal)_{vv}
\end{equation}
Suppose we have another choice of eigenbasis of $L$. Without loss of generality for $\lambda \in \mathsf{Spec}(L)$, the new basis $\vb*{\phi}'_{i_1},\ldots, \vb*{\phi}'_{i_m}$ for $\mathsf{eig}(\lambda)$ is related to the previous eigenbasis $\vb*{\phi}_{i_1},\ldots, \vb*{\phi}_{i_m}$ by an orthonormal transformation transformation $U(\lambda) \in \R^{m \times m}$ on $\Phi(\lambda)$:
\begin{equation*}
   \Phi'(\lambda) =  \qty[\ \vb*{\phi}'_1\  \cdots \ \vb*{\phi}'_m\ ]   = \Phi(\lambda) U(\lambda).
\end{equation*}
As $U(\lambda)$ is an orthonormal transformation with $U(\lambda) U(\lambda)^\intercal = 1$, 
\begin{align*}
    \Phi'(\lambda) \Phi'(\lambda)^\intercal & = \Phi(\lambda) U(\lambda)  \qty(\Phi(\lambda) U(\lambda))^\intercal \\
    &= \Phi(\lambda) U(\lambda) U(\lambda)^\intercal \Phi(\lambda)^\intercal \\
    &= \Phi(\lambda) \Phi(\lambda)^\intercal.
\end{align*}
Since the $V \times V$ matrix $\Phi(\lambda) \Phi(\lambda)^\intercal$ is independent of the choice of eigenbasis, the wavelet signature given on the right hand side of \cref{eq:WSrewritten} must also be independent of the choice of eigenbasis.
\end{proof}
\newpage 
\begin{figure}[!h]
    \centering
    \includegraphics[width = \textwidth]{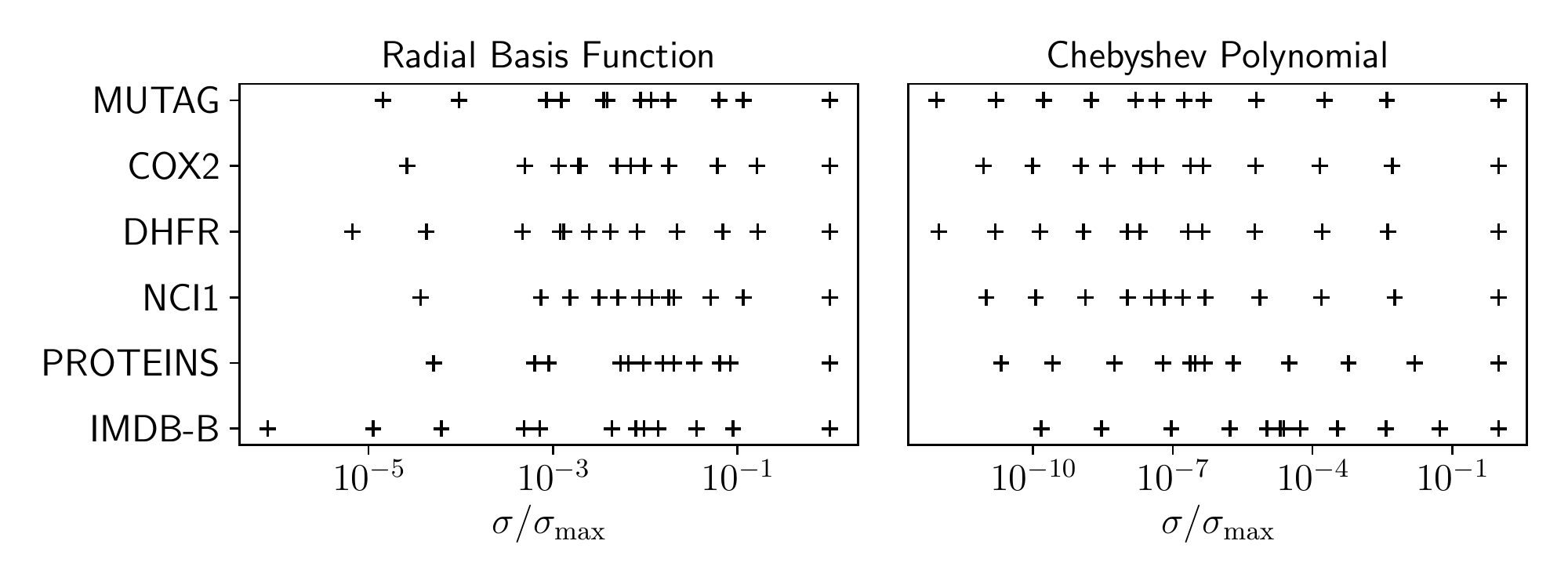}
    \caption{We consider the parametrisations of wavelet signatures on some datasets of graphs in machine learning, namely \MUTAG, \COX, \DHFR, \NCI, \PRO and \IMDB, using coefficients of 12 radial basis functions (see \cref{eq:rbf_actual}) and a degree 13 Chebyshev polynomial respectively. For each dataset, we plot the distribution of the singular values $\sigma$ of the map $\parametrization$ in \cref{eq:param_whole_dataset} from the basis function coefficients $\theta \in \R^{12}$ to the wavelet signature on the whole dataset of graphs, as a fraction of the largest singular value $\sigma_\mathrm{max}$ of $\parametrization$. We can observe that for both parametrisations, the singular values span many orders of magnitudes across different datasets. Note that the singular values of $\parametrization$ not only depend on the choice of basis but also on the dataset of graphs.}
    \label{fig:singular_values}
\end{figure} 

\begin{figure}[!h]
    \centering
    \includegraphics[width = \textwidth]{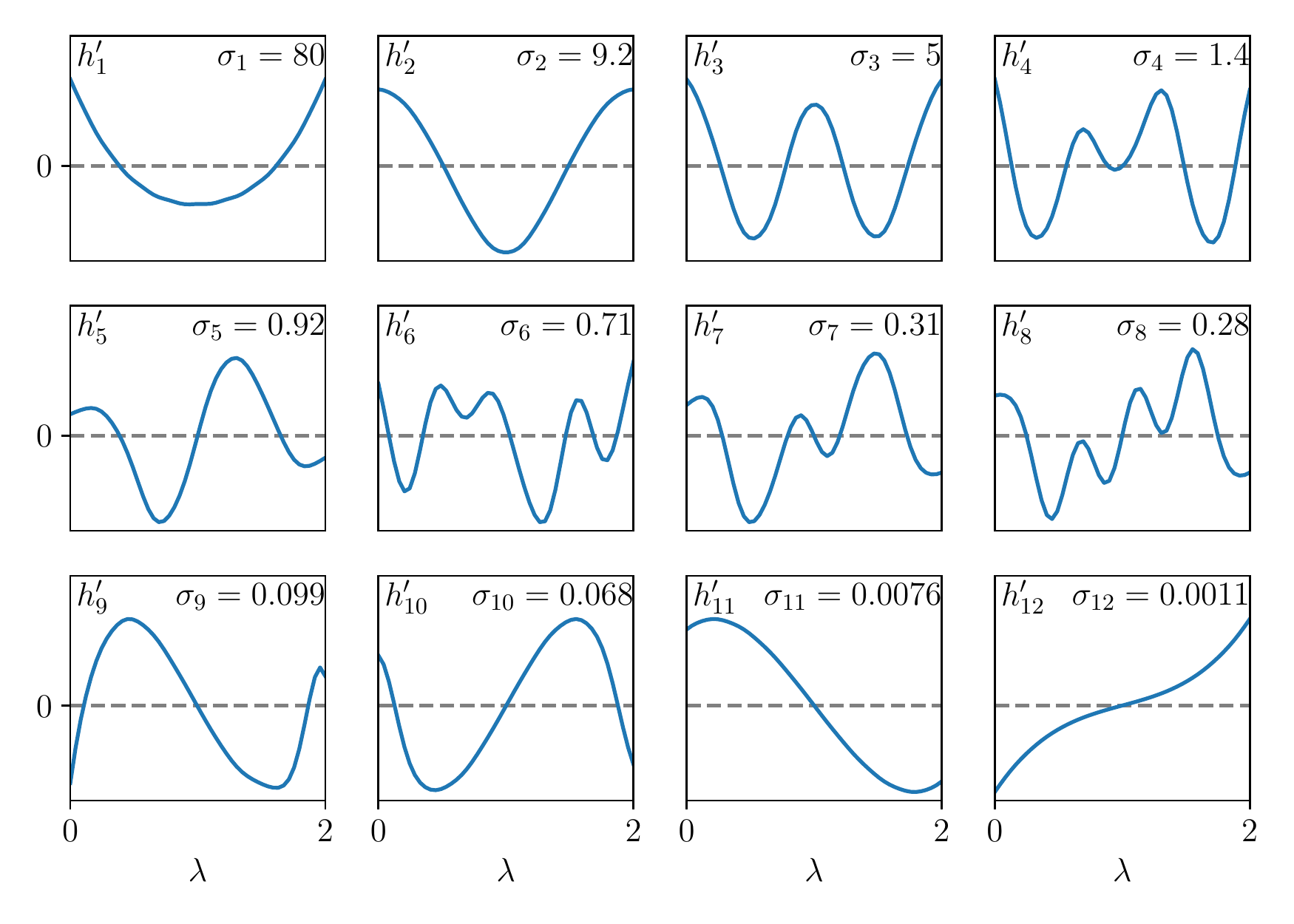}
    \caption{The functions shown are the new, stable wavelet basis $h_1', \ldots , h_{12}'$ (\cref{eq:stable_basis}) for the \MUTAG dataset, derived from an initial numerically unstable parametrisation using twelve inverse multiquadric radial basis functions (\cref{eq:rbf_actual}). We parametrise the wavelet as a linear combination of these basis functions.}
    \label{fig:mutag_basis}
\end{figure}

\section{Experimental Details} \label{app:expt}

\subsection{Persistence Images Parameters}

We vectorised each of the three persistence diagrams $\ExtPH[][][0],\ \ExtPH[rel][][1]$ and $\ExtPH[ext][][1]$ as a persistence image. Prior to vectorising the persistence diagrams, we apply a fixed and identical affine transformation to the values of the vertex functions across all graphs in the dataset concerned, such that the maximum and minimum values taken across all graphs in the dataset of the initial vertex function prior to optimisation are scaled to 1 and 0 respectively. The persistence image is sampled on a $20 \times 20$ grid, whose grid points are equidistantly placed $\sigma = 1/17$ apart on the square $[-\sigma,1+\sigma]^2$ of the persistence diagrams, where $\sigma$ is the width of the Gaussian. The Gaussian centred on the birth and persistence coordinates $\expval{b,p}$ of each point is weighted according to its persistence
\begin{equation*}
    \omega(p) = \sin[2](\frac{\pi}{2}\min\qty(\frac{p}{\sigma}, 1)).
\end{equation*}
Points with persistence $p \geq \sigma$ are assigned a uniform weight $\omega = 1$, else assigned a weight that diminishes to zero as $p \to 0$.  \\ 

\subsection{Convolutional Neural Network Architecture for Persistence Images}
We feed each set of three persistence images belonging to either the optimisable or static persistence diagrams as a three channel image into the following convolutional neural network to obtain a 22$\times$22 image:
\begin{equation} 
    \begin{tikzcd}[column sep = huge ]
    \mathsf{CNN} : \R^{3 \times 20\times 20}  \arrow[rr, "{\mathsf{ReLU} \ \circ \ \mathsf{Conv}\ \circ\ \mathsf{BN2D}}"] && \R^{20 \times 21 \times 21} \arrow[rr, "{\mathsf{ReLU} \ \circ \ \mathsf{Conv}\ \circ\ \mathsf{DO}\ \circ \  \mathsf{BN2D}}"]  &&\R^{22 \times 22}.
    \end{tikzcd} 
\end{equation}
The function $\mathsf{Conv}$ denotes a convolutional layer with kernel size 2, stride 1, padding 1; $\mathsf{BN2D}$ denotes a 2D batch normalisation layer; and $\mathsf{DO}$ denotes a dropout layer with dropout probability $p = 0.5$. \\

\subsection{Multilayer Perceptron for Non-Persistence Features}
We feed non-persistence features as a vector of length $n = \# \text{features}$ into the following multilayer perceptron:
\begin{equation} 
    \begin{tikzcd}[column sep = huge ]
    \mathsf{MLP} : \R^n  \arrow[rr, "{\mathsf{BN} \ \circ \ \mathsf{ReLU} \ \circ \ \mathsf{Aff}\ \circ\ \mathsf{BN}}"] && \R^{n} 
    \end{tikzcd} 
\end{equation}
where $\mathsf{Aff}: \R^n \to \R^n$ denotes an affine transformation, and $\mathsf{BN}$ denotes a batch normalisation layer. \\

\subsection{Path Encoding of Laplacian Eigenvalues}\label{app:expt_path}
For \MUTAG, \COX, \DHFR, and \NCI, we sort the Laplacian eigenvalues in ascending order and transform the one-dimensional sequence into a two-dimensional time series via a \emph{delay embedding}
\begin{equation}
    \mqty[\lambda_1 &  \lambda_2 & \cdots  &  \lambda_{N-1} & \lambda_N ] \mapsto \mqty[ \mqty[\lambda_1 \\ \lambda_2] & \mqty[\lambda_2 \\ \lambda_3] & \cdots &  \mqty[\lambda_{N-2} \\ \lambda_{N-1}] & \mqty[\lambda_{N-1} \\ \lambda_{N}]].
\end{equation}
For \IMDB, we incorporate a fictitious time coordinate $t_j = 2(j-1)/(N-1)$ for $j = 1, \ldots N$ as the second coordinate instead of a `delayed' eigenvalue:
\begin{equation}
    \mqty[\lambda_1 &  \lambda_2 & \cdots  &  \lambda_{N-1} & \lambda_N ] \mapsto \mqty[ \mqty[\lambda_1 \\ t_1] & \mqty[\lambda_2 \\ t_2] & \cdots &  \mqty[\lambda_{N-1} \\ t_{N-1}] & \mqty[\lambda_{N} \\ t_{N}]].
\end{equation}


\end{appendices}
\end{document}